\pgfplotsset{compat=1.11}
\newcommand\bR{\mathbb{R}}
\newcommand\sM{\mathcal{M}}
\newcommand\sH{\mathcal{H}}
\newcommand\sA{\mathcal{A}}
\newcommand\sC{\mathcal{C}}
\newcommand\sT{\mathcal{T}}
\newcommand\mM{M}
\newcommand{\tspan}{\mathrm{span}}
 \newcommand{\vol}{\mathrm{vol}}
 \newcommand{\cone}{\mathrm{cone}}
 \newcommand{\ls}{\langle}
 \newcommand{\rs}{\rangle}
   \newcommand{\re}{\mathcal{R}e}
\newcommand{\defin}{:=}
\newtheorem{theorem}{Theorem}[section]
\newtheorem{proposition}{Proposition}[section]
\newtheorem{lemma}{Lemma}[section]
\newtheorem{definition}{Definition}[section]
\newtheorem{remark}{Remark}[section]
\begin{document}
\date{}

\title{Optimality of 1-norm regularization among weighted 1-norms for sparse recovery: a case study on how to find optimal regularizations}
 \author{Yann  Traonmilin$^a$ and Samuel Vaiter$^b$}
 \address{$^a$CNRS, Institut de Mathématiques de Bordeaux, 33400 Talence, France; $^b$CNRS, Institut de Mathématiques de Bourgogne, 9 avenue Alain Savary, 21000, Dijon}

\ead{$^a$yann.traonmilin@math.u-bordeaux.fr, $^b$samuel.vaiter@u-bourgogne.fr}


\begin{abstract}
The 1-norm was proven to be a good convex regularizer for the recovery of sparse vectors from under-determined linear measurements. It has been shown that with an appropriate measurement operator, a number of measurements of the order of the sparsity of the signal (up to log factors) is sufficient for stable and robust recovery. More recently, it has been shown that such recovery results can be generalized to more general low-dimensional model sets and (convex) regularizers. These results lead to the following question: to recover a given low-dimensional model set from linear measurements, what is the ``best'' convex regularizer? To approach this problem, we propose a general framework to define several notions of ``best regularizer'' with respect to a low-dimensional model. We show in the minimal case of sparse recovery in dimension 3 that the 1-norm is optimal for these notions. However, generalization of such results to the  n-dimensional case seems out of reach. To tackle this problem, we propose looser notions of best regularizer and show that the 1-norm is optimal among weighted 1-norms for sparse recovery within this framework.
\end{abstract}
\vspace*{-2em}
\section{Introduction}
We consider the observation model in a Hilbert space $\sH$ (with associated norm $\|\cdot\|_\sH$):
\begin{equation}
 y = Mx_0
\end{equation}
where $M$ is an under-determined linear operator,  $y$ is a $m$-dimensional vector and $x_0$ is the unknown. We suppose that $x_0$ belongs to a low-dimensional model $\Sigma$ (a union of subspaces). We consider the following minimization program.
\begin{equation} \label{eq:minimization1}
 x^* \in \arg \min_{Mx=y} R(x)
\end{equation}
where $R$ is a regularization function. A huge body of work gives practical regularizers ensuring that $x^* = x_0$ for several low-dimensional models (in particular sparse and low rank models, see \cite{Foucart_2013} for a most complete review of these results). The operator $M$ is generally required to satisfy some property (e.g., the restricted isometry property) to guarantee recovery.  In this work, we aim at finding the ``best'' regularizer for exact recovery of $x_0 \in \Sigma$.  Ideally we would like to set $R = \iota_\Sigma$ (the characteristic function of $\Sigma$) but it is not practical in many cases (sparse and low rank recovery) as it is generally not convex, and even NP-hard to compute as a combinatorial problem. Consequently, we  restrict the search for the best regularizer to a class of interesting regularizers $\sC$. In  our examples, the set $\sC$ is a subset of the set of convex functions.  Other interesting classes might be considered, such as partly smooth functions~\cite{Vaiter2015Modelselectionlow}. 

\subsection{Best regularizer with respect to a low dimensional model}
Defining what is the ``best'' regularizer in $\sC$ for recovery is not immediate. Ideally, to fit to the inverse problem, we must define a compliance measure that depends on both the kind of unknown and measurement operator we consider.  If we have some knowledge that $M \in \sM$ where $\sM$ is a set of linear operators, we want to define a compliance measure $A_{\Sigma,\sM}(R)$ that tells us if a regularizer is good in these situations, and maximize it.  Such maximization might yield a function $R^*$ that depends on $M$ (e.g., in \cite{Soubies_2015}, when looking for tight continuous relaxation of the $\ell^0$ penalty a dependency on $M$ appears). We aim for a more \emph{universal} notion of optimal regularizer that does not depend on $M$. Hence, we look for a compliance measure  $A_{\Sigma}(R)$ and its maximization
\begin{equation}
R^* \in \arg\max_{R \in \sC} A_{\Sigma}(R). 
\end{equation}
In the sparse recovery example studied in this article, the existence of a maximum of $A_{\Sigma}(R)$ is verified. However, we could ask ourselves what conditions on $A_\Sigma(R)$ and $\sC$ are necessary and sufficient for the existence of a maximum, which is out of the scope of this article.

\subsection{Compliance measures}
When studying recovery with a regularization function $R$, two types of guarantees are generally used: uniform and non-uniform. To describe these recovery guarantees, we use the following definition of descent vectors. 	

 \begin{definition}[Descent vectors]
 For any $x \in \sH$, the collection of descent vectors of $R$ at $x$ is
 \begin{equation}
 \sT_{R}(x) \defin \left\{ z \in \sH : R(x+z) \leq R(x) \right\}.
 \end{equation}
\end{definition}
We write $\sT_R(\Sigma):=\bigcup_{x\in \Sigma} \sT_R(x)$. Recovery is characterized by descent vectors (recall that $x^*$ is the result of minimization~\eqref{eq:minimization1}): 
\begin{itemize}
 \item Uniform recovery: Let $M$ a linear operator. Then  ``for all $x_0 \in \Sigma$, $x^* =x_0$'' is equivalent to $\sT_R(\Sigma) \cap \ker M =\{0\}$.
 \item Non-uniform recovery: Let $M$ a linear operator and $x_0 \in \Sigma$.  Then $x^* =x_0$  is equivalent to $\sT_R(x_0) \cap \ker M = \{0\}$.
\end{itemize}

Hence, a regularization function $R$ is ``good'' if  $\sT_R(\Sigma)$ leaves a lot of space for $\ker M$ to not intersect it (trivially). In dimension $n$, if there is no orientation prior on the kernel of $M$, the amount of space left can be quantified by the ``volume'' of $\sT_R(\Sigma) \cap S(1)$ where $S(1)$ is the unit sphere with respect to $\|\cdot\|_\sH$. Hence a compliance measure for uniform recovery can be defined as 
\begin{equation}
A_\Sigma^U(R) := 1 - \frac{\vol\left(\sT_R(\Sigma) \cap S(1)\right)}{\vol(S(1))}.
\end{equation}
More precisely, here, the volume $\vol(E)$ of a set $E$ is the measure of $E$ with respect to the uniform measure on the sphere $S(1)$ (i.e. the $n-1$-dimensional Haussdorf measure of $\sT_R(\Sigma) \cap S(1)$). When maximizing this compliance measure with convex regularizers (proper, coercive and continuous), it has been shown that we can limit ourselves to atomic norms with atoms included in the model \cite[Lemma 2.1]{Traonmilin_2016}. When looking at non-uniform recovery for random Gaussian measurements, the quantity $\frac{\vol\left(\sT_R(x_0) \cap S(1)\right)}{\vol(S(1))}$ represents the probability that a randomly oriented kernel of dimension 1 intersects (non trivially) $\sT_R(x_0)$. The highest probability of intersection with respect to $x_0$  quantifies the lack of compliance of $R$, hence we can define: 
\begin{equation}\label{eq:anusr}
A_\Sigma^{NU}(R) := 1 - \sup_{x \in \Sigma} \frac{\vol\left(\sT_R(x) \cap S(1)\right)}{\vol(S(1))}
\end{equation}
Note that this can be linked with the Gaussian width and statistical dimension theory of sparse recovery \cite{Chandrasekaran_2012, Amelunxen_2014}.

\begin{remark} In infinite dimension, the volume of the sphere $S(1)$ vanishes, making the measures above uninteresting. However, \cite{Traonmilin_2016} and \cite{Puy_2015} show that we can often come back to a low-dimensional recovery problem in an intermediate finite (potentially high dimensional) subspace of $\sH$. Adapting the definition of $S(1)$ to this subspace allows to extend these compliance measures. 
\end{remark}

Another possibility for the $n$-dimensional case, which we develop in this article, is to use recovery results based on the restricted isometry property (RIP). They have been shown to be adequate for multiple models~\cite{Traonmilin_2016}, to be tight in some sense~\cite{Davies_2009} for sparse and low rank recovery, to be necessary in some sense~\cite{Bourrier_2014} and to be well adapted to the study of random operators \cite{Puy_2015}. In particular, it has been shown that if $M$ has a RIP with constant $\delta< \delta_\Sigma(R)$ on the secant set $\Sigma-\Sigma$, with $\delta_\Sigma(R)$ being fully determined by $\Sigma$ and $R$~\cite{Traonmilin_2016}, then stable recovery is possible. Hence, by taking 

\begin{equation}
A_\Sigma^{RIP}(R) := \delta_\Sigma(R),
\end{equation}
the larger $A_\Sigma^{RIP}$ is, the less stringent are RIP recovery conditions for recovery of elements of $\Sigma$ with $R$. We develop these ideas more precisely in Section~\ref{sec:RIP_compliance} and perform the maximization with such measure in the sparse recovery case.

\subsection{Optimality results} 
Within this framework we show the following results:
\begin{enumerate}
 \item In Section~\ref{sec:3d_example}, when $\sH= \bR^3$, $\Sigma = \Sigma_1$ the set of $1$-sparse vectors and $\sC$ is the set of (symmetric) atomic norms with atoms included in the model. Both $A_\Sigma^U(R)$ and $A_\Sigma^{NU}(R)$ are uniquely maximized by multiples of the $\ell^1$-norm (Theorem~\ref{th:3d-u-nu}). In this case, it is possible to exactly compute and maximize the compliance measure. While this study gives a good geometrical insight of the quantities at hand, extending these exact calculations to the general $k$-sparse recovery in dimension $n$ seems out of reach. 
 \item In Section~\ref{sec:RIP_compliance}, we describe precisely how compliance measures based on RIP recovery can be defined. We then study two of these measures (based on state of the art recovery guarantees) for $\sH= \bR^n$, $\Sigma = \Sigma_k$ the set of $k$-sparse vectors (vectors with at most $k$ non zero coordinates) and $\sC$ the set of weighted $\ell^1$-norms. We show that both of these measures are uniquely maximized by multiples of the $\ell^1$-norm (Theorem~\ref{th:RIP_nec} and Theorem~\ref{th:RIP_suff}). 
\end{enumerate}

\section{The case of 1-sparse vectors in 3D} \label{sec:3d_example}

We investigate in detail the case of 3-dimensional vectors which are 1 sparse, i.e. the simplest interesting case (uniform recovery of 1-sparse vectors in 2 dimensions is impossible if $M$ is not invertible).
Here, we have $\Sigma= \Sigma_1$ and $\sH = \bR^3$.
We consider weighted $\ell^1$-norms, i.e., atomic norms of the form $\sC =\{ \|\cdot\|_w : \|z\|_w = w_1|z_1|+w_2|z_2|+w_3|z_3| \} = \{ \|\cdot\|_\sA  : \sA = \{ \pm w_i^{-1} e_i\}_{i=1,3}, w_i > 0, \max |w_i| =1\}$ (where  $\|\cdot\|_\sA$ is the atomic norm generated by atoms $\sA$ ). 
To maximize $A_\Sigma^{U}$ (respectively $A_\Sigma^{NU}$), we have to compute  the surface of the intersections of 3 descent cones with the unit sphere $S(1)$ (respectively the surface of the biggest possible intersection of a descent cone with  $S(1)$).

\begin{figure}[!t]
  \centering
%
%
%
%

\includegraphics[width=0.3\linewidth]{./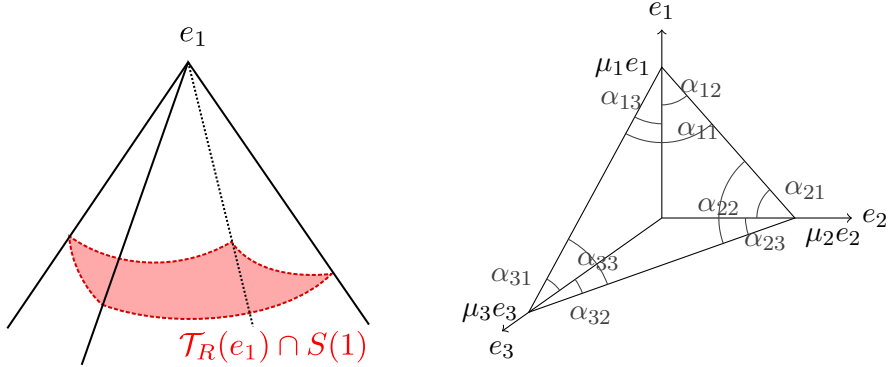}
\hspace{8mm}
\begin{tikzpicture}[scale=2.5]
    \draw[->] (0,0) -- (0,1) node[above] {$e_1$};
    \draw[->] (0,0) -- (1,0) node[right] {$e_2$};
    \draw[->] (0,0) -- (-0.84,-0.6) node[below] {$e_3$};

    \coordinate (orig) at (0,0);
    \coordinate [label=left:$\mu_1 e_1$] (w1) at (0,0.8);
    \coordinate [label=below right:$\mu_2 e_2$] (w2) at (0.7,0);
    \coordinate [label=left:$\mu_3 e_3$] (w3) at (-0.7,-0.5);

    \draw (w1) -- (w2) -- (w3) -- cycle
    pic[draw,angle radius=0.5cm,color=darkgray,"$\alpha_{12}$",right] {angle=orig--w1--w2}
    pic[draw,angle radius=0.75cm,,color=darkgray,"$\alpha_{13}$",left] {angle=w3--w1--orig}
    pic[draw,angle radius=1cm,,color=darkgray,"$\alpha_{11}$",below right] {angle=w3--w1--w2}
    pic[draw,angle radius=0.5cm,color=darkgray,"$\alpha_{21}$",above right] {angle=w1--w2--orig}
    pic[draw,angle radius=0.65cm,color=darkgray,"$\alpha_{23}$",below] {angle=orig--w2--w3}
    pic[draw,angle radius=1cm,color=darkgray,"$\alpha_{22}$",left] {angle=w1--w2--w3}
    pic[draw,angle radius=0.5cm,color=darkgray,"$\alpha_{31}$",above left] {angle=orig--w3--w1}
    pic[draw,angle radius=0.75cm,color=darkgray,"$\alpha_{32}$",below right] {angle=w2--w3--orig}
    pic[draw,angle radius=1.1cm,color=darkgray,"$\alpha_{33}$",above right] {angle=w2--w3--w1};
  \end{tikzpicture}
  \caption{Left: a representation of the surface of interest. Right: parameterization of the angles of a tetrahedron.\vspace*{-5mm}
}
  \label{fig:3d-1-sparse}
\end{figure}
As shown in Figure~\ref{fig:3d-1-sparse}, with symmetries, we need  to compute the intersection of 3 descent cones of a weighted $\ell^1$-norm at 1-sparse vectors. This is the object of Lemma~\ref{lem:3d-u-nu-descent-vol}. Note that in the case of asymmetrical atoms, we would then need to calculate $6 \times 4$ intersections of the sphere with a tetrahedron. 
To ease the notations, we introduce the following quantities which represent the cosine of the angles of a tetrahedron of interest.
For $i \neq j$, set $\mu_i = w_i^{-1}$,$  \beta_{ij}(\mu) = \cos(\alpha_{ij})
  = \left( 1 + \left(\mu_j/\mu_i\right)^2 \right)^{-\frac{1}{2}}
$ and $\beta_{ii} = \prod_{i \neq j} \beta_{ij}$.

\begin{lemma}\label{lem:3d-u-nu-descent-vol}
  Let $i \in \{1,2,3\}$, $x \in \mathbb{R}e_i$ a 1-sparse vector and $\mu_1, \mu_2, \mu_3 > 0$ three positive real numbers.
  Then,
  \begin{equation}
    \vol\left(\sT_{\|\cdot\|_w}(x) \cap S(1) \right) = 
    4 \tan^{-1} \left( \frac{1}{1 + c_i(\mu)} \right)
  \end{equation}
  where
  \begin{equation}
    c_i(\mu) = 1 + \sum_{j \neq i} \beta_{ij}(\mu) + \prod_{j \neq i} \beta_{ij}(\mu) .
  \end{equation}
\end{lemma}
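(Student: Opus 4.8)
The plan is to read off the descent cone explicitly, recognise its trace on $S(1)$ as a spherical polygon, and reduce the area computation to a single closed-form solid-angle evaluation. First I would fix $i$ and, using positive homogeneity and the symmetries $z_j \mapsto -z_j$ of $\|\cdot\|_w$, take $x = \lambda e_i$ with $\lambda>0$. Expanding $\|x+z\|_w \le \|x\|_w$ shows that the relevant descent cone is the polyhedral cone $\mathcal{T} = \{z : w_i z_i + \sum_{j\neq i} w_j|z_j| \le 0\}$, that is, the tangent cone at the vertex $\mu_i e_i$ of the weighted-$\ell^1$ ball (recall $\mu_i = w_i^{-1}$). Since this cone is scale-invariant, $\vol\left(\sT_{\|\cdot\|_w}(x)\cap S(1)\right)$ is exactly the solid angle subtended by $\mathcal{T}$.

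Next I would exploit the geometry of $\mathcal{T}$. Its four faces correspond to the four sign choices of $(z_j)_{j\neq i}$, so $\mathcal{T}$ is the cone over the four edges joining $\mu_i e_i$ to its neighbours $\pm\mu_j e_j$ on the ball, and $\mathcal{T}\cap S(1)$ is a spherical quadrilateral. The two reflections $z_j\mapsto -z_j$ fix $\mathcal{T}$ and cut this quadrilateral into four congruent spherical triangles sharing the vertex $-e_i$; this is the origin of the prefactor $4$. It therefore suffices to compute the area of one triangle, whose vertices are the unit vectors $-e_i$ and $u_j \propto \mu_j e_j - \mu_i e_i$ for the two indices $j \ne i$.

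The third step is the explicit area computation. The pairwise inner products of these three vertices are precisely the quantities already named in the statement: $\langle -e_i, u_j\rangle = \mu_i/\sqrt{\mu_i^2+\mu_j^2} = \beta_{ij}$ and $\langle u_j, u_k\rangle = \beta_{ij}\beta_{ik} = \beta_{ii}$, while the angle at $-e_i$ is $\pi/2$ because the two symmetry planes are orthogonal. Feeding these into a closed-form solid-angle formula for a spherical triangle — for instance the half-angle (Van Oosterom–Strackee) identity, whose denominator $1 + \langle -e_i,u_j\rangle + \langle -e_i,u_k\rangle + \langle u_j,u_k\rangle$ is exactly $c_i(\mu)$ — expresses the area as an arctangent, and assembling the four triangles yields the stated $4\tan^{-1}$ form. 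An equivalent route is to compute the three dihedral angles from the face normals $(e_j)_{j\ne i}$ and $(w_1,w_2,w_3)$ and apply the spherical-excess formula $A + B + C - \pi$.

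I expect the main obstacle to be the final trigonometric consolidation: turning the raw output of the spherical-excess / solid-angle formula (a combination of $\arccos$ or $\arctan$ terms in the $\beta_{ij}$) into a single compact arctangent with argument written through $c_i(\mu)$. This is where half-angle and sum identities must be marshalled and the $\beta_{ij}$ bookkeeping kept straight; everything preceding it is routine linear algebra and symmetry reduction. It is also the step at which one should carefully verify the normalisation of both the prefactor and the argument of $\tan^{-1}$, since these constants are easy to misplace when passing between the single triangle and the full quadrilateral.
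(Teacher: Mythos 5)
Your proposal matches the paper's proof in all essentials: the same identification of the descent cone as $\overline{\cone}\{\pm\mu_j e_j-\mu_i e_i\}_{j\neq i}$, the same four-fold symmetry reduction to a single spherical triangle with apex $-e_i$, and the same Van Oosterom--Strackee half-angle identity with denominator $1+c_i(\mu)$, where the inner products $\beta_{ij}$ and $\beta_{ij}\beta_{ik}$ are computed exactly as in the paper (which phrases the last one via Al-Kashi rather than a direct dot product). The final trigonometric consolidation you flag as the main obstacle is in fact immediate, since the formula already outputs a single arctangent.
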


With this Lemma~\ref{lem:3d-u-nu-descent-vol}, we can prove that the $\ell^1$-norm is the best regularizer for 1-sparse signals both for uniform and non-uniform recovery among all weighted $\ell^1$-norms.

\begin{theorem}\label{th:3d-u-nu} 
  Let $\Sigma= \Sigma_1$ the set of 1-sparse vectors in $\mathbb{R}^3$ and $\sC =\{ \|\cdot\|_w : \|z\|_w = w_1|z_1|+w_2|z_2|+w_3|z_3|, \max |w_i| = 1, w_i  >0 \}$.
  The $\ell^1$-norm is the best regularizer among the class $\sC$ for $\Sigma$, both for the uniform and non-uniform case,
  \begin{equation*}
    \arg \max_{R \in \sC} A_\Sigma^U(R) =
    \arg \max_{R \in \sC} A_\Sigma^{NU}(R) =
    \| \cdot \|_1
  \end{equation*}
\end{theorem}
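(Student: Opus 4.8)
The plan is to use Lemma~\ref{lem:3d-u-nu-descent-vol} to turn both maximizations into elementary optimizations over the weight ratios, and then exploit the permutation symmetry of the problem. First I would record three reductions. (i) \emph{Scale invariance}: for a norm $R$, the descent cone $\sT_R(x)$ depends only on the direction of $x$ and is unchanged when $R$ is multiplied by a positive constant; hence every quantity below depends only on the ratios $\mu_j/\mu_i$, equivalently on $\beta_{ij}(\mu)=(1+(\mu_j/\mu_i)^2)^{-1/2}$, and the normalization $\max_i w_i=1$ merely fixes the scale. (ii) \emph{Sign symmetry}: since $\|\cdot\|_w$ is even, $\vol(\sT_R(e_i)\cap S(1))=\vol(\sT_R(-e_i)\cap S(1))$, and $\sT_R(\Sigma)=\bigcup_i(\sT_R(e_i)\cup\sT_R(-e_i))$. (iii) The six cones meet $S(1)$ in regions overlapping only on a set of measure zero (immediate for $\ell^1$ from the form of the tangent cones, and the same for any weighted $\ell^1$), so $\vol(\sT_R(\Sigma)\cap S(1))=2\sum_{i=1}^3\vol(\sT_R(e_i)\cap S(1))$. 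Combining with Lemma~\ref{lem:3d-u-nu-descent-vol} and writing $\phi(c):=\tan^{-1}(1/(1+c))$ (strictly decreasing and strictly convex for $c>-1$), the uniform problem becomes the minimization of $\sum_i\phi(c_i(\mu))$ and the non-uniform problem the minimization of $\max_i\phi(c_i(\mu))$, i.e.\ the maximization of $\min_i c_i(\mu)$. Throughout I would use the factorization $c_i(\mu)=\prod_{j\neq i}(1+\beta_{ij}(\mu))$ and the identity $\beta_{ij}^2+\beta_{ji}^2=1$.

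For the non-uniform case I expect a short argument. Let $i^\star$ minimize $\mu_{i^\star}$. Then $\mu_j/\mu_{i^\star}\ge 1$ for every $j$, so $\beta_{i^\star j}\le(1+1)^{-1/2}=2^{-1/2}$ and hence $\min_i c_i(\mu)\le c_{i^\star}(\mu)=\prod_{j\neq i^\star}(1+\beta_{i^\star j})\le(1+2^{-1/2})^2$, with equality throughout if and only if $\mu_1=\mu_2=\mu_3$. Since the symmetric choice (the $\ell^1$-norm) attains $c_i=(1+2^{-1/2})^2$ for every $i$, it is the unique maximizer of $\min_i c_i$, hence the unique maximizer of $A_\Sigma^{NU}$.

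The uniform case is the main obstacle, because $\sum_i\phi(c_i)$ mixes all three $c_i$ and cannot be controlled by the single smallest one. Here I would pass to the two free variables $u=\mu_2/\mu_1$, $v=\mu_3/\mu_1$ (equivalently work on the plane $\{\sum_i\log\mu_i=0\}$), on which the objective $F$ is smooth, invariant under scaling, and symmetric under the full $S_3$ action permuting the $\mu_i$. The symmetric point $u=v=1$ is then forced to be a critical point, since at that point $\nabla F$ is $S_3$-invariant and the $S_3$-representation on the two-dimensional tangent space has no nonzero invariant vector. The remaining work is to show it is the \emph{unique} critical point in $(0,\infty)^2$ and a strict global minimum: I would compute $\partial_u F$ and $\partial_v F$ explicitly from $c_i=\prod_{j\neq i}(1+\beta_{ij})$, verify that their only common zero is $u=v=1$, check positive-definiteness of the Hessian there, and rule out escape of the infimum to the boundary by analyzing the limits as any ratio $\mu_j/\mu_i\to 0$ or $\infty$ (where two of the $\beta_{ij}$ degenerate to $0$ or $1$) and comparing the limiting values of $F$ with $3\phi((1+2^{-1/2})^2)$. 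I expect the differentiation and the sign bookkeeping in these limits, rather than any conceptual difficulty, to be the genuinely laborious part; the convexity of $\phi$ and the permutation symmetry are the levers that keep the analysis finite.
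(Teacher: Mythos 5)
Your reduction of both problems to the quantities $c_i(\mu)$ via Lemma~\ref{lem:3d-u-nu-descent-vol}, together with the factorization $c_i(\mu)=\prod_{j\neq i}(1+\beta_{ij}(\mu))$ and the reformulation of the non-uniform problem as maximizing $\min_i c_i(\mu)$, is exactly the route the paper takes, and your non-uniform argument is complete and correct: at the index $i^\star$ of smallest $\mu_{i^\star}$ both ratios $\mu_j/\mu_{i^\star}$ are at least $1$, so $c_{i^\star}(\mu)\le(1+2^{-1/2})^2$ with equality forcing $\mu_1=\mu_2=\mu_3$, and hence $\min_i c_i(\mu)\le(1+2^{-1/2})^2$ with the same equality case. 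This is in fact cleaner than the paper's version, which first establishes a total ordering of the three $c_i$ under $1=\mu_1\le\mu_2\le\mu_3$ through a chain of inequalities on the $\beta_{ij}$ (a step whose direction is delicate: for $\mu_1\le\mu_2$ one has $\beta_{12}=(1+(\mu_2/\mu_1)^2)^{-1/2}\le\beta_{21}$, not the reverse) and then optimizes a single extremal $c_i$; your bound through $i^\star$ sidesteps that ordering entirely.

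The genuine gap is the uniform case, which you correctly identify as the main obstacle and then do not prove. Every decisive step of your plan is deferred: computing $\partial_u F$ and $\partial_v F$, verifying that $(1,1)$ is their \emph{only} common zero on $(0,\infty)^2$, checking the Hessian, and ruling out the boundary. The $S_3$-symmetry observation only yields that $(1,1)$ is \emph{a} critical point, and positive-definiteness of the Hessian there only yields a \emph{local} minimum; global uniqueness of the critical point of $F(u,v)=\sum_i\phi(c_i)$ is precisely the hard content of the uniform statement, and nothing in the outline substitutes for the computation you postpone. Note that the paper does not go through a two-variable critical-point analysis at all: it sandwiches $\sum_{i}\tan^{-1}\bigl(1/(1+c_i(\mu))\bigr)$ between three copies of a single extremal term and the value of the sum at $\mu=(1,1,1)$, so that the uniform case collapses onto the one-variable bound already proved in the non-uniform case (the validity of that sandwich rests on the ordering of the $c_i$ discussed above, so whichever route you take, the comparison between the sum and its extremal term has to be argued with the correct ordering). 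As it stands, your proposal proves the non-uniform half of the theorem and only conjectures the uniform half; you should either carry out the derivative, uniqueness and boundary analysis you defer, or replace that program with a finite comparison argument in the spirit of the paper's, e.g.\ by bounding $\tfrac13\sum_i c_i(\mu)$ using $\beta_{ij}+\beta_{ji}\le\sqrt{2}$ (from $\beta_{ij}^2+\beta_{ji}^2=1$) together with the convexity and monotonicity of $\phi$ that you already recorded.
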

We expect a similar result if $\sC$ is generalized to asymmetrical atomic norms. To extend these exact calculations to the $n$-dimensional case, we would need to be able to compare intersections of spheres and descent cones of $R$ without an analytic formula, which appears to be a very difficult task.

\section{Compliance measures based on the RIP}\label{sec:RIP_compliance}
In this Section, we detail how we can use RIP recovery conditions to build compliance measures that are more easily managed. We start by recalling definitions and results about RIP recovery guarantees then apply our methodology. We also give Lemma that emphasize the relevant quantity (depending on the geometry of the regularizer and the model) to optimize. 
\begin{definition}[RIP constant] 
 In a Hilbert space $\sH$, let $\Sigma$ a union of subspaces. Let $M$ a linear map, the RIP constant of $M$ is defined as 
 \begin{equation}
  \delta(M) = \inf_{x \in \Sigma-\Sigma} \left| \frac{\|Mx\|_\sH^2}{\|x\|_\sH^2}-1 \right|,
 \end{equation}
 where $\Sigma-\Sigma$ (differences of elements of $\Sigma$) is called the secant set. 
 \end{definition}
In \cite{Traonmilin_2016}, an explicit constant $\delta_\Sigma^{suff}(R)$ is given,  such that $\delta(M)<\delta_\Sigma^{suff}(R)$ guarantees exact recovery of elements of $\Sigma$ by minimization~\eqref{eq:minimization1}. This constant is only sufficient (and sharp in some sense for sparse and low rank recovery). An ideal RIP based compliance measure would be to use a sharp RIP constant $\delta^{sharp}_\Sigma(R)$ (which is not explicit, it is an open question to derive closed form formulations of this constant for sparsity and other low-dimensional models) defined as: 
\begin{equation}
 \delta^{sharp}_\Sigma(R) := \inf_{M : \ker M \cap \sT_R(\Sigma) \neq \{0\} } \delta(M).
\end{equation}
It is the best RIP constant of measurement operators where uniform recovery fail. The lack of analytic expressions for $\delta^{sharp}_\Sigma(R)$ limits the possibilities of exact optimization with respect to $R$.
We propose to look at two compliance measures:  
\begin{itemize}
 \item  Measures based on necessary RIP conditions \cite{Davies_2009} which yields sharp recovery constants for particular set of operators, e.g.,
\begin{equation}
 \delta^{nec}_\Sigma(R) := \inf_{z \in \sT_R(\Sigma) \setminus \{0\}} \delta(I-\Pi_z ).
\end{equation}
 where $\Pi_{z}$ is the orthogonal projection onto the one-dimensional subspace $\tspan(z)$  (other intermediate necessary RIP constants can be defined). Another open question is to determine whether  $\delta_\Sigma^{nec}(R) =  \delta^{sharp}_\Sigma(R)$ generally or for some particular models.
 \item Measures based on sufficient RIP constants for recovery (e.g., $\delta_\Sigma^{suff}(R)$ from \cite{Traonmilin_2016}).
\end{itemize}
Note that we have the relation 
 \begin{equation}
\delta_\Sigma^{suff}(R) \leq  \delta^{sharp}_\Sigma(R)  \leq \delta_\Sigma^{nec}(R). 
\end{equation}
To summarize, in the following, instead of considering the most natural RIP based compliance measure (based on $\delta^{sharp}_\Sigma(R)$ ), we use the best known bounds of this measure.

\subsection{Compliance measures based on necessary RIP conditions}\label{sec:nec_RIP}

In this case, instead of working with actual RIP constants, it is easier to use (equivalently) the restricted conditioning.
\begin{definition}[Restricted conditioning]
\begin{equation}
 \gamma(M) := \frac{\sup_{x\in (\Sigma-\Sigma)\cap S(1)} \|Mx\|_\sH^2}{\inf_{x\in (\Sigma-\Sigma)\cap S(1)} \|Mx\|_\sH^2}.
 \end{equation}
\end{definition}
The RIP constant $\delta(M)$ is increasing with respect to $\gamma(M)$. In the following, we consider the compliance measure 

\begin{equation}
A_\Sigma^{RIP,nec}(R) := \gamma_\Sigma(R) =  \inf_{z \in \sT_R(\Sigma) \setminus \{0\}} \gamma(I-\Pi_z ).
\end{equation}
When maximizing $A_\Sigma^{RIP,nec}(R)$, we look  at optimal regularizers for recovery with tight frames with a kernel of dimension 1. 

From here, we specialize  to $\Sigma = \Sigma_k$  and $\sH = \bR^n$. Hence $\Sigma - \Sigma = \Sigma_{2k}$ with $k\geq1$ and  $n\geq 3$ (for $n<3$ uniform recovery is not possible for non-invertible $M$). To show optimality of the $\ell^1$-norm, we use the following characterization of $A_\Sigma^{RIP,nec}(R)$.

\begin{lemma}\label{le:expr_RC_sparse}
Let  $\Sigma = \Sigma_k$  and $\sH = \bR^n$. Then
\begin{equation}
A_\Sigma^{RIP,nec}(R)= \frac{ 1 }{ 1- \inf_{ z \in \sT_R(\Sigma) \setminus \{0\}} \sup_{x\in (\Sigma-\Sigma)\cap S(1)}\frac{\ls x,z\rs^2}{\|z\|_\sH^2}}.
\end{equation}
\end{lemma}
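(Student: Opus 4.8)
The plan is to reduce everything to an explicit computation of $\gamma(I-\Pi_z)$ for a \emph{fixed} descent vector $z$, and only afterwards take the infimum over $z \in \sT_R(\Sigma)\setminus\{0\}$. First I would use the formula for the orthogonal projection onto $\tspan(z)$, namely $\Pi_z x = \frac{\ls x,z\rs}{\|z\|_\sH^2}\, z$, together with the Pythagorean identity $\|(I-\Pi_z)x\|_\sH^2 = \|x\|_\sH^2 - \|\Pi_z x\|_\sH^2$, to obtain
\begin{equation*}
\|(I-\Pi_z)x\|_\sH^2 = \|x\|_\sH^2 - \frac{\ls x,z\rs^2}{\|z\|_\sH^2}.
\end{equation*}
Restricting to $x \in (\Sigma-\Sigma)\cap S(1)$ forces $\|x\|_\sH = 1$, so the numerator and denominator defining $\gamma(I-\Pi_z)$ become $\sup_x\bigl(1 - \tfrac{\ls x,z\rs^2}{\|z\|_\sH^2}\bigr)$ and $\inf_x\bigl(1 - \tfrac{\ls x,z\rs^2}{\|z\|_\sH^2}\bigr)$. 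Pulling the constant through the supremum and infimum rewrites these as $1 - \inf_x \tfrac{\ls x,z\rs^2}{\|z\|_\sH^2}$ and $1 - \sup_x \tfrac{\ls x,z\rs^2}{\|z\|_\sH^2}$ respectively.

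The decisive simplification is that, since $\Sigma-\Sigma = \Sigma_{2k}$ with $2k \geq 2$ and $n \geq 3$, the inner infimum over $x$ vanishes: for any fixed $z \neq 0$ there is a nonzero $2k$-sparse vector orthogonal to $z$. Concretely, choosing a coordinate $i$ with $z_i \neq 0$ and any $j \neq i$ (which exists because $n\geq 3$), the $2$-sparse vector $z_j e_i - z_i e_j$ satisfies $\ls z_j e_i - z_i e_j, z\rs = 0$ and is nonzero; after normalization it lies in $(\Sigma-\Sigma)\cap S(1)$ since $2k \geq 2$. Hence $\inf_{x\in(\Sigma-\Sigma)\cap S(1)}\tfrac{\ls x,z\rs^2}{\|z\|_\sH^2} = 0$, and the per-$z$ value collapses to
\begin{equation*}
\gamma(I-\Pi_z) = \frac{1}{1 - \sup_{x\in(\Sigma-\Sigma)\cap S(1)} \frac{\ls x,z\rs^2}{\|z\|_\sH^2}}.
\end{equation*}

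Finally I would take the infimum over $z$. Writing $f(z) = \sup_{x}\tfrac{\ls x,z\rs^2}{\|z\|_\sH^2}$, Cauchy--Schwarz gives $f(z) \in [0,1]$ (with $f(z)=1$ attained exactly when $z$ is itself $2k$-sparse, since the sup over the compact set $(\Sigma-\Sigma)\cap S(1)$ equals the squared mass of the $2k$ largest entries of $z$, divided by $\|z\|_\sH^2$). The map $t \mapsto \tfrac{1}{1-t}$ is increasing and continuous on $[0,1)$ and tends to $+\infty$ as $t \to 1^-$, so minimizing $\tfrac{1}{1-f(z)}$ over $z$ is equivalent to minimizing $f(z)$, giving $\inf_z \tfrac{1}{1-f(z)} = \tfrac{1}{1-\inf_z f(z)}$, which is precisely the claimed expression for $A_\Sigma^{RIP,nec}(R)$. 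The step requiring the most care is this interchange of the monotone transformation with the infimum, and in particular its behaviour at the boundary $f(z)\to 1$ (where the identity must be read in the extended sense); the only genuinely model-specific ingredient is the vanishing of the inner infimum in the middle step, which is exactly where the hypotheses $k \geq 1$ and $n \geq 3$ are used.
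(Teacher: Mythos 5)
Your proposal is correct and follows essentially the same route as the paper: compute $\|(I-\Pi_z)x\|_\sH^2 = \|x\|_\sH^2 - \ls x,z\rs^2/\|z\|_\sH^2$, rewrite $\gamma(I-\Pi_z)$ as a ratio of $1$ minus an infimum over $1$ minus a supremum, and kill the numerator's infimum by exhibiting a nonzero $2k$-sparse vector orthogonal to $z$. Your explicit construction $z_j e_i - z_i e_j$, and your remarks on interchanging the monotone map $t\mapsto 1/(1-t)$ with the infimum and on the boundary case where the supremum reaches $1$, are if anything slightly more careful than the paper's one-line justification of the same steps.
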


 We consider the set $\sC = \{ R: R(x) = \sum w_i |x_i| = \|x\|_w, \max |w_i| = 1, w\in (\bR_+^*)^n\}$. Note that 
\begin{equation}
\begin{split}
 \arg \max_{R \in \sC} A_\Sigma^{RIP,nec}(R) = \arg \min_{R \in \sC} \underset{z\in \sT_R(\Sigma) \setminus \{0\} }{\sup}  \frac{\|z_{T_2^c}\|_2^2}{\|z_{T_2}\|_2^2}  =  \arg \min_{R \in \sC} B_\Sigma(R)  \\ 
 \end{split}
\end{equation}
where  $T_2$ is a notation for the support of $2k$ biggest coordinates in $z$, i.e. for all $i \in T_2, j\in T_2^c$, we have $|z_i|\geq |z_j|$. 

Studying the quantity $B_\Sigma(R) := \underset{z\in \sT_R(\Sigma) \setminus \{0\} }{\sup}  \frac{\|z_{T_2^c}\|_2^2}{\|z_{T_2}\|_2^2} $ for $R \in \sC$ permits to conclude.

\begin{theorem} \label{th:RIP_nec}
  Let $\Sigma = \Sigma_k$, $\sH = \bR^n$ and $\sC = \{ R: R(x) = \sum w_i |x_i| = \|x\|_w, \max |w_i| = 1, w\in (\bR_+^*)^n\}$. Suppose  $n\geq 2k$, then
 \begin{equation}
 \|\cdot\|_1 = \arg \max_{R\in \sC} A_\Sigma^{RIP,nec}(R).
 \end{equation}
\end{theorem}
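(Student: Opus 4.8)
The plan is to use the reformulation stated just before the theorem, so that it suffices to prove that $\|\cdot\|_1$ is the unique minimizer over $\sC$ of
\[
B_\Sigma(R) = \sup_{z \in \sT_R(\Sigma)\setminus\{0\}} \frac{\|z_{T_2^c}\|_2^2}{\|z_{T_2}\|_2^2}.
\]
First I would make the feasible set explicit. For $R = \|\cdot\|_w$ and a support $H$ with $|H|\le k$, taking $x$ supported on $H$ with $x_i = -T\,\mathrm{sign}(z_i)$ and $T\to\infty$ gives $R(x+z)-R(x) \to \sum_{i\notin H} w_i|z_i| - \sum_{i\in H}w_i|z_i|$. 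Hence, up to a scaling that does not affect the scale-invariant ratio, $z \in \sT_{\|\cdot\|_w}(\Sigma)$ if and only if there is a set $H$, $|H| \le k$, with $\sum_{i\notin H} w_i|z_i| \le \sum_{i\in H} w_i|z_i|$; equivalently the $k$ largest of the values $w_i|z_i|$ carry at least half of $\sum_i w_i|z_i|$.

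Next I would compute $B_\Sigma(\|\cdot\|_1)$. Since both the constraint and the ratio depend only on $(|z_i|)_i$, I may assume $z_1\ge\cdots\ge z_n\ge 0$, so that $T_2=\{1,\dots,2k\}$ and the constraint reads $\sum_{i>k} z_i \le \sum_{i\le k} z_i$. For fixed $\ell^1$ mass on the head $\{1,\dots,k\}$ and on the tail $\{k+1,\dots,n\}$, the sum of squares on each block is minimized (resp.\ the ratio is largest) when the entries are equalized on the head and on the tail, a convexity/rearrangement argument using the ordering constraint. This forces the extremizer $z_1=\cdots=z_k=a$, $z_{k+1}=\cdots=z_n=b$ with the constraint tight, $(n-k)b=ka$; the hypothesis $n\ge 2k$ is exactly what guarantees $a\ge b$, i.e.\ that this vector is admissibly sorted. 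Evaluating gives
\[
B_\Sigma(\|\cdot\|_1)=\frac{k(n-2k)}{(n-k)^2+k^2}=:V.
\]

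For the optimality I would then show $B_\Sigma(\|\cdot\|_w)>V$ whenever $w$ is not constant, by exhibiting a single admissible $z$ beating $V$. Let $H$ be the set of the $k$ largest weights, set $z_i=a$ for $i\in H$ and $z_i=b$ otherwise with $a=b\max\!\big(1,\tfrac{\sum_{i\notin H}w_i}{\sum_{i\in H}w_i}\big)$. Feasibility is immediate from the characterization using this very $H$, and with $r:=a/b$ the ratio equals $\frac{n-2k}{k(r^2+1)}$, which is decreasing in $r$. Because $H$ collects the largest weights, $\sum_{i\in H}w_i$ strictly exceeds its proportional share $\tfrac{k}{n}\sum_i w_i$ as soon as $w$ is non-constant, which forces $r<\tfrac{n-k}{k}$ (the value attained by the constant weights, where $n>2k$ gives $\tfrac{n-k}{k}>1$); hence the ratio strictly exceeds $V$. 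Combined with $B_\Sigma(\|\cdot\|_1)=V$, this shows that the constant weights, i.e.\ $\|\cdot\|_1$, are the unique minimizer of $B_\Sigma$ over $\sC$, proving the theorem.

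The main obstacle is the upper bound $B_\Sigma(\|\cdot\|_1)\le V$: the balanced vector only yields $B_\Sigma(\|\cdot\|_1)\ge V$, and ruling out all other configurations, in particular non-equal tails, vectors whose constraint-support differs from $T_2$, and sub-maximal supports $|H|<k$, requires the careful rearrangement/convexity argument sketched above rather than a one-line estimate. A minor point to dispatch separately is the degenerate case $n=2k$, where $T_2^c=\varnothing$ makes $B_\Sigma\equiv 0$ for every $w$ and uniqueness must be read accordingly; for $n>2k$ the argument gives strict optimality.
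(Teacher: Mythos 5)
Your reduction to minimizing $B_\Sigma$ and your null-space-type characterization of $\sT_{\|\cdot\|_w}(\Sigma)$ (existence of $H$, $|H|\le k$, with $\|z_{H^c}\|_w\le\|z_H\|_w$) are both correct, and your final step does establish $B_\Sigma(\|\cdot\|_w)>V$ for non-constant $w$. The genuine gap is in the benchmark itself: the claim that the supremum defining $B_\Sigma(\|\cdot\|_1)$ is attained by the \emph{fully supported} vector, flat on $\{1,\dots,k\}$ and flat on $\{k+1,\dots,n\}$, is false in general, so the identity $B_\Sigma(\|\cdot\|_1)=V=\tfrac{k(n-2k)}{(n-k)^2+k^2}$ does not hold once $n$ is large relative to $k$. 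Equalizing the tail at fixed $\ell^1$ mass decreases the numerator $\|z_{T_2^c}\|_2^2$ as well as the part $\|z_{T_2\setminus H}\|_2^2$ of the denominator, so the convexity/rearrangement argument does not push the ratio up; concentrating the tail mass on \emph{fewer} coordinates can increase it. Concretely, for $k=1$, $n=5$, the vector $z=(2,1,1,0,0)$ satisfies $\|z_{\{1\}^c}\|_1\le\|z_{\{1\}}\|_1$ and gives ratio $1/5=0.2$, while $V=3/17\approx0.176$. Hence $B_\Sigma(\|\cdot\|_1)>V$ there, your inequality $B_\Sigma(\|\cdot\|_w)>V$ no longer implies $B_\Sigma(\|\cdot\|_w)>B_\Sigma(\|\cdot\|_1)$, and the proof does not close. (Your own flagged ``main obstacle'', the upper bound $B_\Sigma(\|\cdot\|_1)\le V$, is therefore not merely hard but actually false in this regime.)

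The paper circumvents exactly this by stratifying over the support size $2k+L$ of $z$: it defines $B_L(R)$ as the supremum restricted to that stratum, takes the value $B_L(\|\cdot\|_1)=\frac{L/k}{(L/k+1)^2+1}$ from the extremal vectors of \cite{Davies_2009} (flat on the $k$ largest coordinates and on $k+L$ further ones), and proves $B_L(R)\ge B_L(\|\cdot\|_1)$ \emph{for each $L$ separately} --- using $\|w_{H_1}\|_1/\|w_{H_0}\|_1\le(L+k)/k$ when $\|w_{H_0}\|_1<\|w_{H_1}\|_1$ and a flat test vector on $H_0\cup H_1$ otherwise --- before maximizing over $L$; the optimal stratum has $L/k\approx\sqrt2$, not $L=n-2k$. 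To repair your argument you would need (i) to optimize over the tail support size when evaluating $B_\Sigma(\|\cdot\|_1)$, and (ii) to match the support of your test vector for $\|\cdot\|_w$ to the maximizing $L$ rather than always using full support, which is essentially the paper's level-by-level comparison.
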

In the next section,  we will see that the optimization of the sufficient RIP constant leads to very similar expressions.

\subsection{Compliance measures based on sufficient RIP conditions }\label{sec:suff_RIP}
In \cite{Traonmilin_2016}, it was shown for $\Sigma$ a union of subspaces and an arbitrary regularizer $R$,  that an explicit RIP constant $\delta_\Sigma^{suff}(R)$ is sufficient to guarantee reconstruction. From here we set $\Sigma := \Sigma_k$, $\sC = \{ R: R(x) = \sum w_i |x_i| = \|x\|_w, \max |w_i| = 1, w\in (\bR_+^*)^n\}$.

\begin{proposition}\label{prop:charac_suff_RIP}
Let $\Sigma = \Sigma_k$ the set of $k$-sparse vectors. Consider the constant  $\delta_\Sigma^{suff}(R)$  from \cite{Traonmilin_2016}[Eq. (5)], we have:
\begin{equation}
\begin{split}
 \delta^{suff}_\Sigma(R) & = \frac{1}{ \sqrt{ \underset{z\in \sT_R(\Sigma) \setminus \{0\} }{\sup}  \frac{\|z_{T^c}\|_\Sigma^2}{\|z_T\|_2^2} + 1 }} \\
 \end{split}
\end{equation}
where $T$ denotes  the  support of the  $k$ biggest coordinates of $z$ and  $\|\cdot\|_\Sigma$ is  the atomic norm generated by $\Sigma \cap S(1)$ (the convex gauge induced by the convex envelope of  $\Sigma \cap S(1)$).
\end{proposition}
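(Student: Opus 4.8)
The plan is to take the explicit general constant $\delta_\Sigma^{suff}(R)$ of \cite{Traonmilin_2016}[Eq. (5)]—which is stated for an arbitrary union of subspaces $\Sigma$ and convex $R$—and to evaluate each of its geometric ingredients in the particular case $\Sigma=\Sigma_k$, $\sH=\bR^n$. After rewriting, that general constant has the shape
\begin{equation}
\delta_\Sigma^{suff}(R)=\Bigl(1+\sup_{z\in\sT_R(\Sigma)\setminus\{0\}}Q_\Sigma(z)\Bigr)^{-1/2},
\end{equation}
where $Q_\Sigma(z)$ compares the component of a descent vector $z$ that is transverse to the model (measured in the atomic norm $\|\cdot\|_\Sigma$ generated by $\Sigma\cap S(1)$) with the component of $z$ aligned with the model (measured in $\|\cdot\|_2=\|\cdot\|_\sH$). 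The entire task reduces to showing that for $\Sigma=\Sigma_k$ one has $Q_\Sigma(z)=\|z_{T^c}\|_\Sigma^2/\|z_T\|_2^2$, with $T$ the support of the $k$ largest-magnitude coordinates of $z$.

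First I would record the structure of the model: $\Sigma_k=\bigcup_{|S|=k}V_S$ where $V_S=\{x:\supp(x)\subseteq S\}$ is a coordinate subspace, so an element of $\Sigma_k$ approximating $z$ is obtained by fixing a support $S$ of size $k$ and keeping $z_S$, the orthogonal projection of $z$ onto $V_S$. The orthogonal splitting $z=z_S+z_{S^c}$ then identifies the aligned part as $z_S$ and the transverse part as $z_{S^c}$, so that substituting into the general formula turns $Q_\Sigma(z)$ into the value of $\|z_{S^c}\|_\Sigma^2/\|z_S\|_2^2$ at the support $S$ selected by the best-approximation step.

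The key point is that this selected support is exactly $T$, the index set of the $k$ coordinates of $z$ of largest magnitude, and that this choice is simultaneously optimal for numerator and denominator. For the denominator, $\|z_S\|_2^2=\sum_{i\in S}z_i^2$ is maximized over $|S|=k$ precisely by keeping the $k$ largest coordinates. For the numerator, I would use that $\|\cdot\|_\Sigma$ is an absolute and monotone norm—its unit ball $\tconv(\Sigma_k\cap S(1))$ is unconditional and solid, equivalently its dual norm $z\mapsto\sup_{|A|=k}\|z_A\|_2$ is manifestly non-decreasing in each $|z_i|$—so that $\|z_{S^c}\|_\Sigma$ is minimized by removing the largest coordinates from $S^c$, i.e. again by $S=T$. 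Both monotonicities thus point to the same support $T$, which is what makes the clean closed form possible.

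Finally I would assemble: with $S=T$ the ratio is $\|z_{T^c}\|_\Sigma^2/\|z_T\|_2^2$, and taking the supremum over $z\in\sT_R(\Sigma)\setminus\{0\}$ inside $\delta_\Sigma^{suff}(R)=(1+\sup_z Q_\Sigma(z))^{-1/2}$ yields the claimed identity. The main obstacle I anticipate is this support-selection step: one must verify that the abstract ``best approximation in $\Sigma$'' and transverse-component construction used in \cite{Traonmilin_2016}[Eq. (5)] really collapses, for $\Sigma_k$, to keeping the $k$ largest coordinates for \emph{both} the $\|\cdot\|_\Sigma$-numerator and the $\ell^2$-denominator at once. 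Establishing the monotonicity of the $k$-support-type norm $\|\cdot\|_\Sigma$ in the right direction, and checking that no competing support can improve the ratio, is where the combinatorial structure of sparsity must be used in full.
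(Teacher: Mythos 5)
There is a genuine gap: your starting point misrepresents the constant of \cite{Traonmilin_2016}[Eq.~(5)]. That constant is \emph{not} already packaged as $\bigl(1+\sup_z Q_\Sigma(z)\bigr)^{-1/2}$ with $Q_\Sigma$ a transverse/aligned ratio obtained from a best approximation of $z$ in $\Sigma$; it is the inf--sup
\begin{equation*}
\delta_\Sigma^{suff}(R)=\inf_{z\in \sT_R(\Sigma)\setminus\{0\}}\ \sup_{x\in\Sigma}\ \frac{-\re\ls x,z\rs}{\|x\|_\sH\sqrt{\|x+z\|_\Sigma^2-\|x\|_\sH^2-2\re\ls x,z\rs}},
\end{equation*}
where $x$ ranges over \emph{all} $k$-sparse vectors, with arbitrary supports, signs and magnitudes --- not merely over restrictions $-z_S$ of $z$ to $k$-element supports. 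Consequently the entire content of the proposition is to show that this inner supremum equals $\bigl(\|z_{T^c}\|_\Sigma^2/\|z_T\|_2^2+1\bigr)^{-1/2}$. The lower bound is the easy half (plug in $x=-z_T$, which is essentially the only part your plan covers). The upper bound, which your proposal does not address, requires controlling the ratio for an arbitrary $x$ supported on an arbitrary $H$: the paper does this by taking an optimal convex decomposition $x+z=\sum_i\lambda_i u_i$ realizing $\|x+z\|_\Sigma^2=\sum_i\lambda_i\|u_i\|^2$, deriving the splitting inequality $\|z_{H^c}\|_\Sigma^2+\|x_H+z_H\|_2^2\le\|x+z\|_\Sigma^2$, and then applying Cauchy--Schwarz to $\re\ls x_H,z\rs$. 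None of this is a ``support selection'' issue; it is a genuine inequality about the squared atomic norm under coordinate splits, and without it your argument does not connect the general formula to the claimed closed form.

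The one ingredient you do identify correctly --- monotonicity of $\|\cdot\|_\Sigma$ in the coordinatewise absolute values, used to pass from an arbitrary support $H$ to the support $T$ of the $k$ largest entries via $\|z_H\|_2\le\|z_T\|_2$ and $\|z_{T^c}\|_\Sigma\le\|z_{H^c}\|_\Sigma$ --- is indeed needed, and your dual-norm justification of it is a reasonable alternative to the paper's Lemmas on monotonicity of $\|\cdot\|_\Sigma$. But it is only the last, comparatively routine step of the upper bound. To repair the proposal you must first write down Eq.~(5) as it actually is and prove the two-sided bound on the supremum over $x\in\Sigma$; the convex-decomposition inequality and the Cauchy--Schwarz step are the missing core of the argument.
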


In the following, we use 
\begin{equation}\label{eq:compliance_sufficient_RIP}
\begin{split}
 A_\Sigma^{RIP,suff}(R) :=\delta^{suff}_\Sigma(R).
 \end{split}
\end{equation}

It must be noted that  characterization of Proposition~\ref{prop:charac_suff_RIP} was used as a lower bound on $\delta_\Sigma^{suff}$  in \cite{Traonmilin_2016} (hence this part of the proof of the control of $\delta_\Sigma^{suff}$ in \cite{Traonmilin_2016}  is exact).

Similarly to the necessary case, from Proposition~\ref{prop:charac_suff_RIP},  we have 
\begin{equation}
\begin{split}
 \arg \max_{R \in \sC} A_\Sigma^{RIP,suff}(R) = \arg \min_{R \in \sC} \underset{z\in \sT_R(\Sigma) \setminus \{0\} }{\sup}  \frac{\|z_{T^c}\|_\Sigma^2}{\|z_T\|_2^2}  =  \arg \min_{R \in \sC} D_\Sigma(R)  \\ 
 \end{split}
\end{equation}
where $T$ denotes  the  support of the  $k$ biggest coordinates of $z$ and $D_\Sigma(R) := \underset{z\in \sT_R(\Sigma) \setminus \{0\} }{\sup}  \frac{\|z_{T^c}\|_\Sigma^2}{\|z_T\|_2^2} $. Remark the similarity between the fundamental quantity to optimize for the necessary case and the sufficient case $B_\Sigma(R)$ and $D_\Sigma(R)$. Studying the quantity $D_\Sigma(R) $ for $R \in \sC$ leads to the result.

\begin{theorem} \label{th:RIP_suff}
Let $\Sigma = \Sigma_k$, $\sH = \bR^n$ and $\sC = \{ R: R(x) = \sum w_i |x_i| = \|x\|_w, \max |w_i| = 1, w\in (\bR_+^*)^n\}$. Suppose  $n\geq 2k$, then
 \begin{equation}
 \|\cdot\|_1 = \arg \sup_{R\in \sC} A_\Sigma^{RIP,suff}(R).
 \end{equation}
\end{theorem}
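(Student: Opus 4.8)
The plan is to reduce $\arg\sup_{R\in\sC}A_\Sigma^{RIP,suff}(R)=\arg\min_{R\in\sC}D_\Sigma(R)$ to a comparison with the single number $D_\Sigma(\|\cdot\|_1)$, using Proposition~\ref{prop:charac_suff_RIP} together with two elementary facts about the atomic norm $\|\cdot\|_\Sigma$ generated by $\Sigma_k\cap S(1)$: its dual is $v\mapsto\|v_{(k)}\|_\sH$, the Euclidean norm of the $k$ largest entries of $v$ (the atoms being the $k$-sparse Euclidean-unit vectors), and $\|u\|_\Sigma=\|u\|_1/\sqrt k$ whenever $u$ is flat on its support (averaging the $k$-subset atoms reproduces a flat vector, and the dual pairing is tight there). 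First I would make the descent cones explicit. Since the ratio $\|z_{T^c}\|_\Sigma^2/\|z_T\|_2^2$ is $0$-homogeneous, I may replace $\sT_{\|\cdot\|_w}(\Sigma)$ by the cone it generates: for $x\in\Sigma_k$ with support $S$ the descent directions are the $z$ with $\sum_{i\notin S}w_i|z_i|\le\sum_{i\in S}w_i|z_i|$, and unioning over $x\in\Sigma$ the binding $S$ is the index set of the $k$ largest products $w_i|z_i|$. Using invariance under positive scaling and coordinate sign flips, $D_\Sigma(\|\cdot\|_w)$ then becomes a maximisation over nonnegative magnitude vectors subject to this single weighted inequality, with $T$ the support of the $k$ largest coordinates of $z$.

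Next I would evaluate $D_\Sigma(\|\cdot\|_1)=1$. Here the constraint is $\|z_{T^c}\|_1\le\|z_T\|_1$. Normalising $\|z_T\|_1=1$, the denominator is minimised by a flat $z_T$, giving $\|z_T\|_2^2=1/k$ and forcing every off-support entry below $1/k$; maximising $\|z_{T^c}\|_\Sigma$ under $\|z_{T^c}\|_1\le 1$ and $\|z_{T^c}\|_\infty\le 1/k$ yields exactly $1/\sqrt k$ (feed the dual certificate's $k$ largest entries at level $1/k$ and apply Cauchy--Schwarz), so the ratio equals $1$. This matches the sharp sufficient constant $\delta_\Sigma^{suff}(\|\cdot\|_1)=1/\sqrt{1+1}=1/\sqrt 2$ known for $\ell^1$ recovery.

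It then remains to show $D_\Sigma(\|\cdot\|_w)>1$ for every admissible $w\neq\mathbf 1$, which yields both optimality and uniqueness. The mechanism is that a coordinate $j$ with $w_j=t<1$ is \emph{cheap} in the descent inequality: placing spread mass on $j$ costs only $t$ per unit in the budget $\sum_{i\notin S}w_i|z_i|\le\sum_{i\in S}w_i|z_i|$ while contributing a full unit to the unweighted numerator. Concretely I would put the peak set $T$ on the $k$ largest weights (flat at level $a$, fixing the budget $a\sum_{i\in T}w_i$), fill the cheap coordinate up to the cap $a$, and distribute the remaining budget over the other off-support coordinates; when enough low-weight coordinates are available a short water-filling argument shows the resulting unweighted spread mass strictly exceeds $ka$, so that $\|z_{T^c}\|_\Sigma$ strictly exceeds the value $\sqrt k\,a$ of the $\ell^1$ case and the ratio is pushed above $1$. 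The simplest instance already exhibits the effect: for $k=1$, $n=3$ and $w=(1,s,t)$, the vector $z=(1,1-t,1)$ lies in the weighted descent cone and gives ratio $(2-t)^2>1$.

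The main obstacle is precisely this last construction in full generality, where two points need care. First, one must control the non-closed-form numerator $\|z_{T^c}\|_\Sigma$ from below for the \emph{non-flat} spread produced by water-filling; here I would either keep the filled coordinates flat at a common level (so the identity $\|\cdot\|_\Sigma=\|\cdot\|_1/\sqrt k$ applies directly) or fall back on the block bound $\|u\|_\Sigma\le\sum_{k\text{-blocks}}\|u_{\text{block}}\|_2$ and its reverse on aligned blocks. Second, the argument must accommodate weight vectors in which only a few coordinates attain the maximum, e.g. $w=(1,1-\varepsilon,\dots,1-\varepsilon)$, where the usable slack is small; one checks that the slack is nonetheless strictly positive, so that a local perturbation of the $\ell^1$ extremiser shifting mass onto the cheapest coordinate stays feasible and strictly increases the ratio. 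Finally I would flag the degenerate boundary $n=2k$, where $T^c$ has only $k$ coordinates and the ratio is at most $1$ for every $w$, so that strict uniqueness genuinely requires $n>2k$; this parallels the necessary-case analysis of $B_\Sigma(R)$ behind Theorem~\ref{th:RIP_nec}, whose proof structure I would mirror throughout.
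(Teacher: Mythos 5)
Your overall architecture matches the paper's: both proofs are non-constructive, reducing the claim to showing $D_\Sigma(R)\geq D_\Sigma(\|\cdot\|_1)=1$ for all $R\in\sC$ by exhibiting explicit flat test vectors in the weighted descent cone, with strict inequality off the optimum. The technical route differs, though. The paper stratifies by support size, splits $\sC$ into $\sC_L=\{\|w_{H_0}\|_1<\|w_{H_1}\|_1\}$ and its complement, and runs everything through a chain of lemmas on $D_L$ (Lemmas~\ref{lem:optimal_support_nsigma}--\ref{lem:charact_supDL2}), quoting \cite{Traonmilin_2016}[Theorem 4.1] for the upper bound $D_\Sigma(\|\cdot\|_1)\le 1$; you instead work with a single test vector and control $\|\cdot\|_\Sigma$ via its dual (the bounds $\|v\|_\Sigma\ge\|v\|_1/\sqrt k$ and $\|v\|_\Sigma\ge\|v\|_2$), which is leaner and, importantly, yields an actual strictness mechanism: your perturbation of the $\ell^1$ extremiser (flat on $H_0\cup H_1$, then spend the strictly positive leftover budget $a(\|w_{H_0}\|_1-\|w_{H_1}\|_1)$ on a coordinate outside $H_0\cup H_1$ and invoke $\|\cdot\|_\Sigma\ge\|\cdot\|_2$) proves $D_\Sigma(R)>1$ for $w\neq\mathbf 1$, whereas the paper's final ``with equality only when $R=\|\cdot\|_1$'' is asserted without a corresponding strict inequality in~\eqref{eq:ineq_DBL1}. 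Two caveats. First, your general water-filling step is left at the level of a plan; as written it is the hardest part (e.g.\ when $\|w_{H_0}\|_1<k$ the budget itself shrinks and one must check that cheap off-support coordinates more than compensate), so you should make the perturbation argument the actual proof rather than a fallback. Second, be slightly careful with the identity $\|u\|_\Sigma=\|u\|_1/\sqrt k$ for flat $u$: it holds only when $|\mathrm{supp}(u)|\ge k$ (otherwise $\|u\|_\Sigma=\|u\|_2$), which is the regime you use it in. Finally, your observation about $n=2k$ is correct and worth flagging: there $T^c$ has only $k$ coordinates, so $\|z_{T^c}\|_\Sigma=\|z_{T^c}\|_2\le\|z_T\|_2$ and $D_\Sigma(R)=1$ for \emph{every} $R\in\sC$ (the all-ones vector is feasible for every $w$), so the uniqueness implicit in the theorem and in the paper's proof genuinely requires $n>2k$; your argument, and the paper's, only establish strict optimality in that range.
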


\section{Discussion and future work}
We have shown that, not surprisingly, the $\ell^1$-norm is optimal among weighted $\ell^1$-norms for sparse recovery for several notions of compliance. This result had to be expected due to symmetries of the problem. However, the important point is that we could explicitly quantify the notion of good regularizer. This is promising for the search of optimal regularizers for more complicated low-dimensional models such as ``sparse and low rank'' models or hierarchical sparse models. For the case of sparsity, we expect to be able to generalize the optimality of the $\ell^1$-norm to the more general case of atomic norms  with atoms included in the model. We also expect similar result for low-rank recovery and the nuclear norm as technical tools are very similar. 

It must be noted that for RIP compliance measures, we did not use a constructive proof (we exhibited the maximum of the compliance measure). A constructive proof, i.e. an exact calculation and optimization of the quantities $B_\Sigma(R)$ and $D_{\Sigma}(R)$ would be more satisfying as it would not require the knowledge of the optimum, which is our ultimate objective. 

We used compliance measures based on (uniform) RIP recovery guarantees to give results for the general sparse recovery case, it would be interesting to do such analysis using  (non-uniform) recovery guarantees based on the statistical dimension or Gaussian width of the descent cones \cite{Chandrasekaran_2012,Amelunxen_2014}. One would need to precisely lower and upper bound these quantities, similarly to our approach with the RIP, to get satisfying results. 

Finally, while these compliance measures are designed to make sense with respect to known results in the area of sparse recovery, one might design other compliance measures tailored for particular needs, in this search for optimal regularizers.
\section*{Acknowledgements}
This work was partly supported by the CNRS PEPS JC 2018 (project on efficient regularizations). The authors would like to thank Rémi Gribonval for his insights on this problem. 
\section*{References}
\bibliographystyle{abbrv}
 \bibliography{opti_convex_sparse_reg}

\section{Annex}
This section describes the tools and proofs used to obtain our results.
 \subsection{Proofs for Section~\ref{sec:3d_example}}
 
\begin{proof}[Proof of Lemma~\ref{lem:3d-u-nu-descent-vol}]
  We start our proof by observing that the descent cone at $x$ is given by the conic envelope of the rays defined by vectors of the form $\pm \mu_j e_j - \mu_i e_i$, i.e.,
  \begin{equation}
    \sT_{\| \cdot \|_w}(x) = \overline{\cone}
    \left\{
      \pm \mu_j e_j - \mu_i e_i
    \right\}_{j \neq i} .
  \end{equation}
  Using the symmetry of this set, we can split it as four tetrahedra of equal size, depending on the sign in front of the vector $e_j$, 
  \begin{equation}
    \sT_{\| \cdot \|_w}(x) = \bigcup_{s \in \{-1,1\}^2} \overline{\cone}
    \left\{
      (s_j \mu_j e_j - \mu_i e_i)_{j \neq i}, - \mu_i e_i
    \right\}.
  \end{equation}
  We have reduced the problem to computing the area of the intersection between a tetrahedron and the unit sphere.
  To ease the notation, let $i=1$, and we consider the cone
  \begin{equation}
    \mathcal{T} = \overline{\cone}
    \left\{
      \mu_2 e_2 - \mu_1 e_1,
      \mu_3 e_3 - \mu_1 e_1,
      - \mu_1 e_1
    \right\} ,
  \end{equation}
  as depicted in Figure~\ref{fig:3d-1-sparse}.
  Following~\cite[Equation (8)]{vanoosterom1983solid}, we have that
  \begin{equation}
    \tan\left(\frac{1}{2} \vol(\mathcal{T} \cap S(1))\right) =
    \frac{1}{1 + \sum_{j=1}^3 \cos \alpha_{i,j}} ,
  \end{equation}
  Using the definition of $\beta_{ij}$, we have that $\cos \alpha_{ij} = \beta_{ij}$.
  Indeed, using standard relation in a triangle, one has
  \begin{equation*}
    \cos \alpha_{12} = \frac{\mu_1}{\sqrt{\mu_1^2 + \mu_2^2}} = \beta_{12} ,\quad
    \cos \alpha_{13} = \frac{\mu_1}{\sqrt{\mu_1^2 + \mu_3^2}} = \beta_{13} \\
  \end{equation*}
  and, using Al-Kashi relation,
  \begin{equation*}
    \cos \alpha_{11} = \frac{\mu_1^2}{\sqrt{
      \left( \mu_1^2 + \mu_2^2 \right)
      \left( \mu_1^2 + \mu_3^2 \right)
    }} = \beta_{12} \beta_{13} = \beta_{11} .
  \end{equation*}
\end{proof}

\begin{proof}[Proof of Theorem~\ref{th:3d-u-nu}]
  Assume w.l.o.g that $1=w_1 \geq w_2 \geq w_3 > 0$, i.e., $1=\mu_1 \leq \mu_2 \leq \mu_3$.

  \emph{Non-uniform case.}
  Using Lemma~\ref{lem:3d-u-nu-descent-vol}, we compute the quantity
  \begin{align}
    \sup_{x \in \Sigma} \vol \left(\sT_R(x) \cap S(1)\right)
    &= \sup_{i \in \{ 1,2,3 \}} \vol \left(\sT_R(e_i) \cap S(1)\right) \\
    &= \sup_{i \in \{ 1,2,3 \}} 4 \tan^{-1} \left( \frac{1}{1 + c_i(\mu)} \right) . \label{eq:nu-closed-3d-cone}
  \end{align}
  Now, 
  \begin{align*}
    \arg \max_{R \in \sC} A_\Sigma^{NU}(R)
    &= \arg \max_{R \in \sC} \left\{  1 - \sup_{x \in \Sigma} \frac{\vol\left(\sT_R(x) \cap S(1)\right)}{\vol(S(1))} \right\} & \text{by definition~\eqref{eq:anusr}}\\
    &= \arg \min_{R \in \sC} \sup_{x \in \Sigma} \vol\left(\sT_R(x) \cap S(1)\right) \\
    &=  \arg \min_{1=\mu_1 \leq \mu_2 \leq \mu_3} \max_{i \in \{ 1,2,3 \}}  4 \tan^{-1} \left( \frac{1}{1 + c_i(\mu)} \right) & \text{using~\eqref{eq:nu-closed-3d-cone}} \\
    &= \arg \min_{1=\mu_1 \leq \mu_2 \leq \mu_3} \max_{i \in \{ 1,2,3 \}} \frac{1}{1 + c_i(\mu)} & \text{$\tan^{-1}$ is increasing}
  \end{align*}
  Hence, 
  \begin{equation*}
    \arg \max_{R \in \sC} A_\Sigma^{NU}(R) =
    \arg \min_{1=\mu_1 \leq \mu_2 \leq \mu_3} \min_{i \in \{ 1,2,3 \}} c_i(\mu) .
  \end{equation*}
  Observe that under the constraint $1=\mu_1 \leq \mu_2 \leq \mu_3$, the minimum of $c_i(\mu)$ is achieved at $i=3$.
  Indeed, we have $\mu_2^2 \geq \frac{1}{\mu_2^2}$, $\mu_3^2 \geq \frac{1}{\mu_3^2}$, thus $\beta_{12} \geq \beta_{21} \geq \beta_{31}$ and $\beta_{13} \geq \beta_{23} \geq \beta_{32}$.
  Since every term is positive, we also have that $\beta_{12} \beta_{13} \geq \beta_{21} \beta_{23} \geq \beta_{31} \beta_{32}$.
  Thus, $c_1(\mu) \geq c_2(\mu) \geq c_3(\mu)$.
  Hence, we solve the problem
  \begin{equation*}
    \min_{1=\mu_1 \leq \mu_2 \leq \mu_3}
    1 + \left( 1 + \frac{1}{\mu_3^2} \right)^{-1/2}
    + \left( 1 + \frac{\mu_2^2}{\mu_3^2} \right)^{-1/2}
    + \left(  \left(1 + \frac{1}{\mu_3^2}\right)\left(1 + \frac{\mu_2^2}{\mu_3^2}\right) \right)^{-1/2} ,
  \end{equation*}
  which is achieved for $\mu_1 = \mu_2 = \mu_3 = 1$.
  
  \emph{Uniform case.} The computation is similar, replacing the minimum with the sum of $c_i(\mu)$ for $i \in \{ 1,2,3 \}$.
  \begin{align*}
    \vol \left(\sT_R(\Sigma) \cap S(1)\right)
    &= \sum_{i=1}^3 \vol \left(\sT_R(e_i) \cap S(1)\right) & \text{no intersection between the cones} \\
    &= 4 \sum_{i=1}^3 \tan^{-1} \left( \frac{1}{1 + c_i(\mu)} \right) & \text{using Lemma~\ref{lem:3d-u-nu-descent-vol}} .
  \end{align*}
  Again, using the fact than $\tan^{-1}$ and $c_1(\mu) \geq c_2(\mu) \geq c_3(\mu)$, we have that
  \begin{equation*}
    3 \min_{1=\mu_1 \leq \mu_2 \leq \mu_3} \tan^{-1} \left( \frac{1}{1 + c_1(\mu)} \right)
    \leq \min_{1=\mu_1 \leq \mu_2 \leq \mu_3} \sum_{i=1}^3 \tan^{-1} \left( \frac{1}{1 + c_i(\mu)} \right)
    \leq \sum_{i=1}^3 \tan^{-1} \left( \frac{1}{1 + c_i(1,1,1)} \right)
  \end{equation*}
  Since, the left term is minimized by $\mu=(1,1,1)$ as proved in the non-uniform case and that
  \begin{equation*}
    \sum_{i=1}^3 \tan^{-1} \left( \frac{1}{1 + c_i(1,1,1)} \right)
    =  3 \tan^{-1} \left( \frac{1}{1 + c_1(1,1,1)} \right) ,
  \end{equation*}
  we have that the minimum is achieved at $\mu_1 = \mu_2 = \mu_3 = 1$.
  Moreover, since $\mu_1 \mapsto c_1(1,\mu_1,\mu_2)$ and $\mu_2 \mapsto c_1(1,\mu_1,\mu_2)$ are strictly increasing, this minimum is unique.
\end{proof}

\subsection{Proofs for Section~\ref{sec:nec_RIP}}

\subsubsection{Link between RIP constants and restricted conditioning.}\label{sec:link_RIP_RC}
 Suppose $M$ has RIP with constants $0 \leq \delta_1<1$, $\delta_2>-\delta_{1}$ on $\Sigma-\Sigma$, i.e.
\begin{equation}\notag
(1-\delta_1)\|x\|_\sH^2 \leq \|\mM x\|_2^2 \leq (1+\delta_2)\|x\|_\sH^2.
\end{equation}
For any $\lambda>0$ we have 
\begin{equation}\notag
\lambda^2(1-\delta_1)\|x\|_\sH^2 \leq \|\lambda \mM x\|_2^2 \leq \lambda^2(1+\delta_2)\|x\|_\sH^2
\end{equation}
We look for $\lambda,\delta'$ such that $\lambda M$ has symmetric RIP $\delta'$:
\begin{equation}
\begin{split}
 \lambda^2(1-\delta_1) &= 1-\delta'\\
 \lambda^2(1+\delta_2) &= 1+\delta'
\end{split}
\end{equation}
It implies 
\begin{equation}
\begin{split}
 \frac{(1-\delta')}{1-\delta_1}(1+\delta_2) &= 1+\delta' \\
 \delta' (1+ \frac{1+\delta_2}{1-\delta_1}) &=  \frac{(1+\delta_2)}{1-\delta_1}- 1  \\   
 \delta'  \frac{2+\delta_2-\delta_1}{1-\delta_1}& = \frac{\delta_2+\delta_1}{1-\delta_1}\\
 \delta' & = \frac{\delta_2+\delta_1}{2+\delta_2-\delta_1} 
\end{split}
\end{equation}

 We can always rescale $M$ such that $M$ has RIP $\delta'$. taking $\delta_1(M)= 1-\inf_{x\in (\Sigma-\Sigma)\cap S(1)} \|Mx\|_\sH^2$ and 
 $\delta_2(M)= \sup_{x\in (\Sigma-\Sigma)\cap S(1)} \|Mx\|_\sH^2-1$ 
We have 
$\delta' = \frac{\sup_{x\in (\Sigma-\Sigma)\cap S(1)} \|Mx\|_\sH^2 -\inf_{x\in (\Sigma-\Sigma)\cap S(1)} \|Mx\|_\sH^2}{ \sup_{x\in (\Sigma-\Sigma)\cap S(1)} \|Mx\|_\sH^2 +\inf_{x\in (\Sigma-\Sigma)\cap S(1)} \|Mx\|_\sH^2} = \frac{\gamma(M)-1}{\gamma(M)+1}$.

\subsubsection{Proofs}
\begin{proof}[Proof of Lemma~\ref{le:expr_RC_sparse}]
Let $z \in \sH$. Let $M =I - \Pi_z$ where $\Pi_{z}$ is the orthogonal projection onto the one-dimensional subspace $\tspan(z)$.

Let $x \in \Sigma - \Sigma$, we always have $\|Mx\|_2 \leq \|x\|_\sH$ and 
\begin{equation}\notag
 \|Mx\|_2^2 = \|x-\Pi_z x\|_\sH^2 = \|x\|_\sH^2 -2 \ls x, \Pi_z x\rs + \|\Pi_z x\|_\sH^2=\|x\|_\sH^2 - \frac{\ls x,z\rs^2}{\|z\|_\sH^2}.
\end{equation} 
Thus 
\begin{equation}
 \frac{\|Mx\|_2^2 }{\|x\|_\sH^2} = 1 -\frac{\ls x,v\rs^2}{\|x\|_\sH^2\|v\|_\sH^2}
\end{equation}
and 
\begin{equation}
\begin{split}
\gamma(M) &= \frac{\sup_{x\in (\Sigma-\Sigma)\cap S(1)} \|Mx\|_\sH^2}{\inf_{x\in (\Sigma-\Sigma)\cap S(1)} \|Mx\|_\sH^2} \\
&= \frac{ 1 + \sup_{x\in (\Sigma-\Sigma)\cap S(1)} -\frac{\ls x,z\rs^2}{\|z\|_\sH^2} }{ 1+ \inf_{x\in (\Sigma-\Sigma)\cap S(1)} -\frac{\ls x,z\rs^2}{\|z\|_\sH^2}}. \\
&= \frac{ 1  -\inf_{x\in (\Sigma-\Sigma)\cap S(1)} \frac{\ls x,z\rs^2}{\|z\|_\sH^2} }{ 1-\sup_{x\in (\Sigma-\Sigma)\cap S(1)}\frac{\ls x,z\rs^2}{\|z\|_\sH^2}} 
\end{split}
\end{equation}
For all $z \in \sH$, there is $x \in \Sigma - \Sigma =\Sigma_{2k}$such that  $\ls x,z\rs =0$, hence $\inf_{x\in (\Sigma-\Sigma)\cap S(1)} \frac{\ls x,z\rs^2}{\|z\|_\sH^2}=0$  (just take $x \perp z_T$ where $T$ is the  support of the greatest $2k$ coordinates of $z$) . This gives  the desired result.
\end{proof}

The following  Lemma shows that we can simplify the calculation of $B_\Sigma(R)$

\begin{lemma} \label{lem:optimal_support_l2}
Let $\Sigma = \Sigma_k$ the set of $k$-sparse vectors. Let $R =\|\cdot\|_w  \in \sC$,
\begin{equation}
B_\Sigma(R) := \sup_{ z \in \sT_R(\Sigma) \setminus \{0\} } \frac{\|z_{T_2^c}\|_2^2}{\|z_{T_2}\|_2^2}  =  \sup_{ z \neq 0 :\|z_{H_0}\|_w\geq\|z_{H_0^c}\|_w} \frac{\|z_{T_2^c}\|_2^2}{\|z_{T_2}\|_2^2}.\\
\end{equation}
where $H_0$ is the support of the $k$ biggest weights: i.e. for all $i \in H_0,j \notin H_0$, we have $w_i \geq w_j$. 
\end{lemma}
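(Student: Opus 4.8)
The plan is to show the two suprema coincide by proving the two inequalities separately. Write $\mathcal{D} := \sT_R(\Sigma)\setminus\{0\}$ and $\mathcal{E} := \{z \neq 0 : \|z_{H_0}\|_w \geq \|z_{H_0^c}\|_w\}$, and observe first that the objective $z \mapsto \|z_{T_2^c}\|_2^2/\|z_{T_2}\|_2^2$ depends only on the nonincreasing rearrangement of $(|z_i|)_i$ (numerator and denominator are sums of squares of the $n-2k$ smallest and $2k$ largest absolute values), hence is invariant under any permutation of coordinates and under sign changes. Next I would record the descent-set characterization: arguing with the reverse triangle inequality exactly as in the proof of Lemma~\ref{lem:3d-u-nu-descent-vol}, one checks that $z \in \sT_R(\Sigma)$ if and only if there is a support $S$ with $|S|\leq k$ and $\|z_{S^c}\|_w \leq \|z_S\|_w$ (the ``if'' direction takes $x=-t\,\mathrm{sign}(z)$ on $S$ with $t\geq \max_i|z_i|$). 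Taking $S=H_0$ shows $\mathcal{E}\subseteq\mathcal{D}$, so $\sup_{\mathcal{E}} \leq \sup_{\mathcal{D}} = B_\Sigma(R)$; this is the easy inequality.

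For the reverse inequality it suffices to show that every $z\in\mathcal{D}$ admits a coordinate rearrangement $z'\in\mathcal{E}$ with the same objective value. I would fix $z\in\mathcal{D}$ with a witness $S$ of size $k$ (pad $S$ if needed), so $\|z_S\|_w \geq \|z_{S^c}\|_w$, and build $z'$ (taken nonnegative, signs being irrelevant) by redistributing the multiset $\{|z_i|\}_{i\in S}$ onto $H_0$ via the \emph{comonotone} pairing (largest value to largest weight) and $\{|z_i|\}_{i\in S^c}$ onto $H_0^c$ via the \emph{anti-monotone} pairing (largest value to smallest weight). Since $z'$ has the same multiset of absolute values as $z$, the objective is preserved by the permutation-invariance noted above, so it only remains to verify $z'\in\mathcal{E}$.

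The core of the argument is the pair of estimates $\|z'_{H_0}\|_w \geq \|z_S\|_w$ and $\|z'_{H_0^c}\|_w \leq \|z_{S^c}\|_w$; chaining them with $\|z_S\|_w \geq \|z_{S^c}\|_w$ gives $\|z'_{H_0}\|_w \geq \|z'_{H_0^c}\|_w$, i.e. $z'\in\mathcal{E}$. Each estimate combines the rearrangement inequality (the comonotone pairing maximizes, the anti-monotone pairing minimizes a weighted sum over a fixed multiset of nonnegative values) with a coordinatewise weight-domination fact: the sorted weights of $H_0$ dominate those of any $k$-subset, while the sorted weights of $H_0^c$ are dominated by those of any $(n-k)$-subset, because $H_0$ collects the $k$ largest weights. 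Concretely, $\|z'_{H_0}\|_w$ is the comonotone value of the $H_0$-weights against the $S$-values, which is at least the comonotone value of the $S$-weights against those same values (larger weights, nonnegative values), which in turn is at least $\|z_S\|_w$; the dual chain, using smaller weights and the minimizing pairing, gives the $H_0^c$ bound. I expect the main obstacle to be exactly this weight-domination step, which I would establish by a counting argument: for every $j$, any $(n-k)$-subset contains at least $j$ of the top $k+j$ weights (its complementary $k$-set can delete at most $k$ of them), so its $j$-th largest weight is at least the $j$-th largest weight of $H_0^c$, and dually for $H_0$. Everything else is routine bookkeeping with the rearrangement inequality and the permutation-invariance of the objective.
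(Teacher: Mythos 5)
Your proof is correct and follows essentially the same route as the paper's: characterize $\sT_R(\Sigma)$ as the union over supports $|S|\le k$ of $\{z:\|z_{S^c}\|_w\le\|z_S\|_w\}$, then use the permutation-invariance of $\|z_{T_2^c}\|_2^2/\|z_{T_2}\|_2^2$ to move any witness support onto $H_0$, exploiting that $H_0$ carries the $k$ largest weights. The only difference is that you make explicit, via the rearrangement inequality and the sorted weight-domination counting argument, what the paper compresses into the phrase ``adequately transfer the values of $H$ to $H_0$''.
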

\begin{proof}
 
We have 
\begin{equation}
B_\Sigma(R) = \sup_{H : |H|\leq k}\sup_{z \neq 0 : \|z_H\|_w\geq\|z_{H^c}\|_w} \frac{\|z_{T_2^c}\|_2^2}{\|z_{T_2}\|_2^2}.  \\
\end{equation}
Let $z$ such that $\|z_H\|_w\geq\|z_{H^c}\|_w$, there is $z'$ a permutation of $z$ (adequately transfer the values of $H$ to $H_0$ and the values on $H^c$ to $H_0^c$) such that $\|z_{H_0}'\|_w \geq \|z_H\|_w\geq\|z_{H^c}\|_w \geq \|z_{H_0^c}'\|_w $. We remark that the expression $ \frac{\|z_{T_2^c}\|_2^2}{\|z_{T_2}\|_2^2} $ does not depend on the order of the coordinates of $z$. This  shows that $\sup_{H : \|z_H\|_w\geq\|z_{H^c}\|_w} \frac{\|z_{T_2^c}\|_2^2}{\|z_{T_2}\|_2^2} \leq \sup_{H_0 : \|z_{H_0}\|_w\geq\|z_{H_0^c}\|_w}  \frac{\|z_{T_2^c}\|_2^2}{\|z_{T_2}\|_2^2} $ which concludes the proof.
\end{proof}

\begin{lemma} \label{lem:charact_supBp_final}
Let $\Sigma = \Sigma_k$ the set of $k$-sparse vectors. Let $R =\|\cdot\|_w  \in \sC$ such that $ \|w_{H_0}\|_1< \|w_{H_0^c}\|_1$. We have 
\begin{equation}
B_\Sigma(R) = \sup_{ z \in \sT_R(\Sigma) \setminus \{0\} } \frac{\|z_{T_2^c}\|_2^2}{\|z_{T_2}\|_2^2}  =  \sup_{z \neq 0 : \|z_{H_0}\|_w=\|z_{H_0^c}\|_w} \frac{\|z_{T_2^c}\|_2^2}{\|z_{T_2}\|_2^2}  \\
\end{equation}
where $H_0$ is the support of the $k$ biggest weights.
\end{lemma}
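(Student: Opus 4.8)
The plan is to exploit that the objective $g(z):=\frac{\|z_{T_2^c}\|_2^2}{\|z_{T_2}\|_2^2}$ is invariant under scaling of $z$ and depends only on $(|z_i|)_i$, and to rewrite it, using the partition $\|z\|_2^2=\|z_{T_2}\|_2^2+\|z_{T_2^c}\|_2^2$, as $g(z)=\frac{\|z\|_2^2}{\|z_{T_2}\|_2^2}-1$. In this form, maximizing $g$ amounts to minimizing the fraction of the energy of $z$ carried by its $2k$ largest entries. Since $\|z_{T_2}\|_2^2=\max_{|S|\le 2k}\|z_S\|_2^2$ is continuous and nonzero for $z\neq 0$, the map $g$ is continuous on $\sH\setminus\{0\}$ and $0$-homogeneous, so it attains its supremum over the closed feasible cone $\sF:=\{z:\|z_{H_0}\|_w\ge\|z_{H_0^c}\|_w\}$ on the compact set $\sF\cap S(1)$; I write $z^\star$ for a maximizer. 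Throughout I assume $n>2k$ (for $n=2k$ one has $T_2^c=\emptyset$ and the statement is trivially $0=0$), and I replace $z$ by $(|z_i|)_i$ to assume $z\ge 0$ without changing $g$, $\|z_{H_0}\|_w$ or $\|z_{H_0^c}\|_w$.

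The inclusion of the equality set into $\sF$ gives $\sup_{=}g\le\sup_{\sF}g=B_\Sigma(R)$ for free, so the content is the reverse inequality, and it suffices to show that $z^\star$ can be taken on the boundary $\{\|z_{H_0}\|_w=\|z_{H_0^c}\|_w\}$. The key geometric fact I would use is that the \emph{unconstrained} maximizer of $g$ is the flat vector $\bar z=c\,\mathbf 1$, with value $g(\bar z)=\frac{n-2k}{2k}$ (spreading the energy can only decrease its concentration on the top $2k$ coordinates). The hypothesis $\|w_{H_0}\|_1<\|w_{H_0^c}\|_1$ says exactly that $\|\bar z_{H_0}\|_w=c\|w_{H_0}\|_1<c\|w_{H_0^c}\|_1=\|\bar z_{H_0^c}\|_w$, i.e. the global maximizer of $g$ lies \emph{strictly in the complement of} $\sF$. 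First I would discard the degenerate case $z^\star_{H_0^c}=0$: then $\supp z^\star\subseteq H_0$ has at most $k$ entries, all in $T_2$, so $g(z^\star)=0$, which is impossible since $n>2k$ forces $B_\Sigma(R)>0$ (take a feasible $z$ with one large coordinate on $H_0$ and tiny equal entries on the tail).

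With $z^\star_{H_0^c}\neq 0$ and strict inequality $\|z^\star_{H_0}\|_w>\|z^\star_{H_0^c}\|_w$, I would move $z^\star$ toward the flat vector along the segment $z(\lambda)=(1-\lambda)z^\star+\lambda\bar z$, $\lambda\in[0,1]$. The map $\phi(\lambda):=\|z(\lambda)_{H_0}\|_w-\|z(\lambda)_{H_0^c}\|_w$ is affine with $\phi(0)>0$ and $\phi(1)<0$, hence vanishes at a unique $\lambda^\star\in(0,1)$ where $z(\lambda^\star)$ lies on the boundary. If $g(z(\lambda))$ is non-decreasing on $[0,\lambda^\star]$, then $g(z(\lambda^\star))\ge g(z^\star)=B_\Sigma(R)$, and since $z(\lambda^\star)$ is feasible with equality this gives $\sup_{=}g\ge B_\Sigma(R)$ and closes the argument. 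An alternative route that avoids the flat-vector target is to rescale only the off-support block, $z(t)=z^\star_{H_0}+t\,z^\star_{H_0^c}$ for $t\ge 1$, reaching the boundary at $t^\star=\|z^\star_{H_0}\|_w/\|z^\star_{H_0^c}\|_w$; here the weight-sum hypothesis guarantees that the boundary is crossed \emph{before} $g$ peaks, because at the most balanced configuration along the ray the constraint is already violated.

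The main obstacle is precisely this monotonicity of $g$ up to the boundary, because the set $T_2$ of the $2k$ largest coordinates changes along the path and $g$ is only piecewise smooth. I would attack it by a majorization argument: for $z^\star\ge 0$, moving toward its mean produces $z(\lambda)\prec z^\star$, and $z\mapsto\|z_{T_2}\|_2^2=\max_{|S|\le 2k}\sum_{i\in S}z_i^2$ is permutation-invariant and convex, hence Schur-convex, so $\|z(\lambda)_{T_2}\|_2^2$ decreases along the segment. The remaining technical point, which is the crux, is to promote this into a decrease of the scale-invariant \emph{fraction} $\|z_{T_2}\|_2^2/\|z\|_2^2$ (equivalently an increase of $g$), combining the Schur-convexity of the numerator with the simultaneous contraction of the total energy; I expect this to follow from the Schur criterion applied directly to the ratio, or, in the block-rescaling formulation, from tracking which coordinates cross the tail/head threshold as $t$ grows and checking that each such crossing cannot decrease $g$ before $t^\star$.
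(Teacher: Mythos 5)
Your overall strategy is genuinely different from the paper's and is attractive: you rewrite $g(z)=\|z\|_2^2/\|z_{T_2}\|_2^2-1$, note that its unconstrained maximizer is the flat vector, and observe that the hypothesis $\|w_{H_0}\|_1<\|w_{H_0^c}\|_1$ places that flat vector strictly outside the feasible cone, so a constrained maximizer should be pushed onto the boundary by deforming it toward the flat vector. The paper instead argues discretely: starting from $z$ with $\|z_{H_0}\|_w-\|z_{H_0^c}\|_w=\beta>0$ it adjusts one coordinate at a time (raising a small coordinate of $T_2^c$ toward the threshold $\min_{i\in T_2}|z_i|$, or lowering a large coordinate of $T_2$ toward it), checks that each step does not decrease the ratio while shrinking the slack $\beta$, and concludes by induction on the number of off-threshold coordinates. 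Both routes hinge on the same monotonicity phenomenon, and yours correctly isolates where the hypothesis on the weights enters.

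The gap is that you leave exactly this monotonicity --- that $g$ does not decrease along the segment from $z^\star$ toward its mean before the boundary is crossed --- as a conjecture, and the first tool you propose to close it does not work. The energy fraction $\|z_{T_2}\|_2^2/\|z\|_2^2$ is \emph{not} Schur-convex: with the top-$1$ fraction, $(2,1,1)\prec(2,1.5,0.5)$ yet the fractions are $4/6>4/6.5$. So ``the Schur criterion applied directly to the ratio'' fails, and Schur-convexity of the numerator alone says nothing, since the denominator shrinks simultaneously. The claim you need is in fact true for the specific mean-ward segment, but it requires its own computation rather than a generic majorization argument: since the order of coordinates (hence $T_2$) is preserved along the segment, writing $N(\lambda)=\sum_{i\in T_2}z_i(\lambda)^2$ and $M(\lambda)=\sum_{i\in T_2^c}z_i(\lambda)^2$, one must check $M'/M\ge N'/N$, which reduces to the inequality $\sum_{i\in T_2}z_i^2/\sum_{i\in T_2}z_i\ \ge\ \sum_{i\in T_2^c}z_i^2/\sum_{i\in T_2^c}z_i$ between $z$-weighted averages of the large and the small coordinates. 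Your second route (rescaling only the $H_0^c$ block) has the same unresolved issue of tracking coordinates crossing the $T_2$ threshold. Until one of these is carried out (or replaced by the paper's coordinate-by-coordinate induction), the proof is incomplete.
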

\begin{proof}
 We use Lemma~\ref{lem:optimal_support_l2}. Let $z$ such that $\|z_{H_0}\|_w-\|z_{H_0^c}\|_w=\beta>0$. This implies that $z$ is not a constant vector otherwise the hypothesis on weights would be violated.  If there is $i_0 \in T_2^c$ such that $|z_{i_0}| <\min_{i\in T_2} |z_j|=\alpha $ we set $z'$ such that $|z_{i_0}'| = \min(\alpha, |z_{i_0}| + |	\beta|)  $  otherwise  there is  $i_0 \in T_2$ such that $|z_{i_0}| >\min_{i\in T_2} |z_j|=\alpha $, we set $z'$ such that $|z_i'| = \max(\alpha, |z_{i_0}| - \beta)  $ and $|z_i'| = |z_i|$ for $i \neq i_0$. In both case, we have 
 \begin{equation}
  \frac{\|z_{T_2^c}\|_2^2}{\|z_{T_2}\|_2^2}  \leq \frac{\|z_{T_2'^c}\|_2^2}{\|z_{T_2'}\|_2^2} .
 \end{equation}
We either have $\|z_{H_0}'\|_w=\|z_{H_0^c}'\|_w$ or $\|z_{H_0}'\|_w-\|z_{H_0^c}'\|_w >\beta'$, in this last case we strictly reduced the size of the support of $z_i$ such that  $z_i \neq \min_{i\in T_2} |z_j|$. Induction on the size of this support leads to the conclusion that there is always a $z'$ such that $\|z_{H_0}'\|_w=\|z_{H_0^c}'\|_w$  and $\frac{\|z_{T_2^c}\|_2^2}{\|z_{T_2}\|_2^2}  \leq \frac{\|z_{T_2'^c}\|_2^2}{\|z_{T_2'}\|_2^2} $.
\end{proof}

This Lemma allows to bound $B_\Sigma(R)$ when $R$ is not to far from the $\ell^1$-norm

\begin{lemma}\label{lem:charact_supBL2}
Let $\Sigma = \Sigma_k$ the set of $k$-sparse vectors. Let $R =\|\cdot\|_w \in \sC$. Let $H_0$  a support of $k$ biggest weights in $w$, let $L\geq 1$ an integer and $H_1$ a support of  $k+L$ smallest weights in $w$. Suppose $\|w_{H_0}\|_1 < \|w_{H_1}\|_1 $, then
\begin{equation}
B_L(R) := \sup_{ z \neq 0 : \|z_{H_0}\|_w=\|z_{H_0^c}\|_w , |supp(z)| =2k+L} \frac{\|z_{T_2^c}\|_2^2}{\|z_{T_2}\|_2^2}  \geq \frac{  \frac{L}{k}}{ \frac{\|w_{H_1}\|_1^2}{\|w_{H_0}\|_1^2} + 1 }. \\
\end{equation}
\end{lemma}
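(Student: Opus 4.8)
The plan is to establish this lower bound on a supremum in the most direct way possible: exhibit a single explicit feasible vector $z$ and compute the ratio $\|z_{T_2^c}\|_2^2/\|z_{T_2}\|_2^2$ for it, showing that it already reaches the claimed right-hand side. Since the feasible set requires $|\supp(z)| = 2k+L$, the setup implicitly presupposes $n \geq 2k+L$; in particular $H_0$ (a support of the $k$ largest weights) and $H_1$ (a support of the $k+L$ smallest weights) may be taken disjoint, so that $H_0 \cup H_1$ has exactly $2k+L$ elements.

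First I would take $z$ to be constant on each of these two blocks: $z_i = u$ for $i \in H_0$, $z_i = v$ for $i \in H_1$, and $z_i = 0$ otherwise, for some $u > 0$ and $v > 0$ to be fixed. Since $z$ vanishes off $H_0 \cup H_1$, we have $z_{H_0^c} = z_{H_1}$, so that $\|z_{H_0}\|_w = u\,\|w_{H_0}\|_1$ and $\|z_{H_0^c}\|_w = v\,\|w_{H_1}\|_1$ (using $w_i > 0$). The feasibility constraint $\|z_{H_0}\|_w = \|z_{H_0^c}\|_w$ then fixes $v = u\,\|w_{H_0}\|_1/\|w_{H_1}\|_1$, and $\supp(z) = H_0 \cup H_1$ has the required cardinality $2k+L$ as soon as $u \neq 0$, so $z$ is feasible.

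The key step is the identification of $T_2$, the support of the $2k$ largest coordinates of $z$, and this is exactly where the hypothesis $\|w_{H_0}\|_1 < \|w_{H_1}\|_1$ enters: it forces $v < u$. Hence the $2k$ largest coordinates are the $k$ entries of value $u$ on $H_0$ together with any $k$ of the $k+L$ entries of value $v$ on $H_1$, while the remaining $L$ entries of value $v$ fall in $T_2^c$. Consequently $\|z_{T_2}\|_2^2 = k(u^2 + v^2)$ and $\|z_{T_2^c}\|_2^2 = L v^2$. Substituting $v^2 = u^2\,\|w_{H_0}\|_1^2/\|w_{H_1}\|_1^2$ and writing $r := \|w_{H_0}\|_1^2/\|w_{H_1}\|_1^2$ gives the ratio $\frac{L}{k}\cdot\frac{r}{1+r} = \frac{L/k}{\|w_{H_1}\|_1^2/\|w_{H_0}\|_1^2 + 1}$, which is precisely the asserted bound. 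Since $z$ belongs to the feasible set, $B_L(R)$ dominates this value, which concludes the proof.

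There is no serious obstacle here; the construction is explicit and the computation routine. The only point requiring genuine care, and the sole place the standing hypothesis is used, is confirming $v < u$: this is what guarantees that all $k$ of the large ($u$) coordinates and exactly $k$ of the small ($v$) coordinates land in $T_2$, leaving a tail of $L$ coordinates outside, and hence that the split of the squared norms takes the clean form above. I would state the $v<u$ step carefully and treat the disjointness of $H_0$ and $H_1$ (equivalently, the implicit $n \geq 2k+L$) explicitly so the cardinality bookkeeping is unambiguous.
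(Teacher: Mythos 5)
Your proof is correct and is essentially the paper's own argument: the same test vector (constant $\beta$ on $H_0$, constant $\alpha$ on $H_1$, zero elsewhere, with the ratio $\beta/\alpha = \|w_{H_1}\|_1/\|w_{H_0}\|_1$ enforcing the equality constraint), the same use of the hypothesis to get $\alpha < \beta$ and hence identify $T_2$, and the same computation of the ratio. Your explicit attention to the disjointness of $H_0$ and $H_1$ and the implicit requirement $n \geq 2k+L$ is a small but welcome addition of care over the paper's version.
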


\begin{proof}
Let $z$ such that for $j \in H_1$, $|z_j| = \alpha$, for $j\in H_0$, $|z_j| = \beta$ and $z_j=0$ otherwise. Set  $0 \neq \beta = \frac{\|w_{H_1}\|_1}{\|w_{H_0}\|_1}\alpha $. Hence, $\|z_{H_0}\|_w=\|z_{H_1}\|_w$ and, using the hypothesis, $\beta \geq \alpha$. This gives

\begin{equation}
  \frac{\|z_{T_2^c}\|_2^2}{\|z_{T_2}\|_2^2} = \frac{  L\alpha^2}{ k\beta^2 + k\alpha^2 } =\frac{  L}{ k\frac{\|w_{H_1}\|_1^2}{\|w_{H_0}\|_1^2} + k }
 \end{equation}
 which is the desired result.
\end{proof}

We can conclude with the proof of the theorem.
\begin{proof}[Proof of Theorem~\ref{th:RIP_nec}]
 Let $\sC_L =\{\|\cdot\|_w \in \sC : \|w_{H_0}\|_1 < \|w_{H_1}\|_1 \}$  (we define  $H_0$  as the support of $k$ biggest weights in $w$, and  $H_1$ as the support of the $k+L$ smallest weights).  We have $\frac{\|w_{H_1}\|_1}{\|w_{H_0}\|_1} \leq \frac{L+k}{k}$. Hence, for $R \in \sC_L$, with Lemma~\ref{lem:charact_supBL2}, the value $B_L(R)$ is lower bounded (and reached) by the case $\|\cdot\|_1$:
\begin{equation}\label{eq:ineq_BL1}
\begin{split}
 B_L(R)\geq \frac{  \frac{L}{k}}{ \frac{\|w_{H_1}\|_1^2}{\|w_{H_0}\|_1^2} + 1 } \geq B_L(\|\cdot\|_1) =\frac{ \frac{L}{k}}{ \left(\frac{L}{k}+1\right)^2 + 1 }, \\
 \end{split}
\end{equation}
where the value of $B_L(\|\cdot\|_1)$ was obtained by calculating $\frac{\|z_{T_2^c}\|_2^2}{\|z_{T_2}\|_2^2}$ for the optimal $z$ given by \cite{Davies_2009} ($z$ flat on $H_0$ and flat on $H_1$).  The maximum value of $B_L(\|\cdot\|_1)$  (with respect to $L$) is reached for $L/k$ maximizing $f(u) = u/((u+1)^2+1)$ (which is maximized at $\sqrt{2}$ over $\bR$, see Lemma~\ref{lem:function_study1}). We verify that it matches the necessary RIP condition $\frac{1}{\sqrt{2}}$ from \cite{Davies_2009} ($f(\sqrt{2})= 2\sqrt{2}/(2+\sqrt{2})$ which gives $\gamma_\Sigma(\|\cdot\|_1) = (4+3\sqrt{2})/\sqrt{2} = \frac{\sqrt{2}+1}{\sqrt{2}-1}$). Note that the inequality~\eqref{eq:ineq_BL1} is strict if $R \neq \|\cdot\|_1$.

If $R \in \sC_L'$ the complementary in $\sC$ of $\sC_L $, take $z$ such that $z_i =\alpha$ for $i \in H_0 \cup H_1$, $z_i=0$ otherwise. We have $\|z_{H_0}\|_w \geq \|z_{H_1}\|_w$ and
\begin{equation}
B_L(R) \geq \frac{L}{2k} > \frac{ \frac{L}{k}}{ \left(\frac{L}{k}+1\right)^2 + 1 } =   B_L(\|\cdot\|_1)
\end{equation}
Hence, by noting $L_0 = \arg \max_L B_L(\|\cdot\|_1)$, we just showed that for any $R \in \sC$, 
\begin{equation}
  B_\Sigma(R)=\max_L B_L(R)  \geq B_{L_0}(R) \geq \max_{L}  B_L(\|\cdot\|_1) =  B_\Sigma(\|\cdot\|_1)
\end{equation}
with equality only when $R = \|\cdot\|_1$. Taking the $\inf$ with respect to $R\in \sC$ of both sides of this inequality yields the result.
\end{proof}

\begin{lemma} \label{lem:function_study1}
The function $f(u)= \frac{u}{(u+1)^2+1}$ is increasing for $0\leq  u\leq \sqrt{2}$ and decreasing for $ \sqrt{2}\leq u\leq +\infty$
\end{lemma}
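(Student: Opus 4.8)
The plan is to treat this as a routine single-variable calculus problem, since $f$ is smooth on $[0,+\infty)$. First I would rewrite the denominator in expanded form, $(u+1)^2+1 = u^2+2u+2$, so that $f(u) = u/(u^2+2u+2)$. Observe that the denominator satisfies $u^2+2u+2 = (u+1)^2+1 \geq 1 > 0$ for every real $u$, so $f$ is well defined and differentiable on all of $[0,+\infty)$; there are no issues of vanishing denominators to worry about.

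Next I would compute $f'$ by the quotient rule. The numerator of $f'$ is $(u^2+2u+2)\cdot 1 - u\cdot(2u+2) = 2-u^2$, so that
\begin{equation}
f'(u) = \frac{2-u^2}{(u^2+2u+2)^2}.
\end{equation}
Since the denominator is the square of a strictly positive quantity, it is strictly positive, and therefore the sign of $f'(u)$ is exactly the sign of $2-u^2$.

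Finally I would read off the monotonicity from this sign analysis. On $[0,+\infty)$ one has $2-u^2 > 0$ precisely when $u < \sqrt{2}$ and $2-u^2 < 0$ precisely when $u > \sqrt{2}$, with $f'(\sqrt{2})=0$. Hence $f$ is strictly increasing on $[0,\sqrt{2}]$ and strictly decreasing on $[\sqrt{2},+\infty)$, which is the claim; this also confirms that $u=\sqrt{2}$ is the unique maximizer of $f$ over $\bR_{+}$, as invoked in the proof of Theorem~\ref{th:RIP_nec}.

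There is no genuine obstacle here: the entire argument rests on the positivity of the denominator and on the sign of a single quadratic. The only point demanding mild care is the algebraic simplification of the quotient-rule numerator, where the $+2u$ coming from the first factor $u^2+2u+2$ must cancel the $-2u$ arising from $-u(2u+2)=-2u^2-2u$, leaving the clean expression $2-u^2$. Everything else is immediate.
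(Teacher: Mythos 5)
Your proof is correct and follows essentially the same route as the paper: both differentiate $f$ and observe that the quotient-rule numerator simplifies to $2-u^2$, so that $u=\sqrt{2}$ is the unique critical point on $(0,+\infty)$. Your explicit sign analysis of $f'(u)=\frac{2-u^2}{(u^2+2u+2)^2}$ is in fact slightly more self-contained than the paper's version, which only locates the zero of $f'$ and then appeals to $f(0)=0$ and $\lim_{u\to\infty}f(u)=0$ to deduce the monotonicity pattern.
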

\begin{proof}
 For $u \geq 0$,  $f'(u) = 0$ is equivalent to $(u+1)^2+1 -2u(u+1) =0$ which yields $u=\sqrt{2}$. Using the fact that $f(0)=0$ and $\lim_{u\to\infty} f(u) =0$ gives the result.
\end{proof}

\subsection{Proofs for Section~\ref{sec:suff_RIP}}
We refer the reader to  \cite{Traonmilin_2016}[Section 2.2] and \cite{Argyriou_2012} for properties of the atomic norm $\|\cdot\|_\Sigma$
\begin{proof}[Proof of \ref{prop:charac_suff_RIP}]
 The constant $\delta_\Sigma^{suff}(R)$ \cite{Traonmilin_2016}[Eq. (5)] has the following expression:
\begin{equation}
 \delta_\Sigma^{suff}(R)  = \underset{z\in \sT_R(\Sigma) \setminus \{0\} }{\inf}  \underset{x\in \Sigma}{\sup} \frac{-\re \ls x,z \rs}{\|x\|_\sH \sqrt{ \|x+z\|_\Sigma^2 -  \|x\|_\sH^2 - 2\re \ls x,z \rs}}
\end{equation}
 
 We show that for any $z$, 
 \begin{equation}
\begin{split}
  \underset{x\in \Sigma}{\sup} \frac{-\re \ls x,z \rs}{\|x\|_\sH \sqrt{ \|x+z\|_\Sigma^2 -  \|x\|_\sH^2 - 2\re \ls x,z \rs}} =  \frac{1}{ \sqrt{ \underset{z\in \sT_R(\Sigma) \setminus \{0\} }{\sup}  \frac{\|z_{T^c}\|_\Sigma^2}{\|z_T\|_2^2} + 1 }}. \\
 \end{split}
\end{equation}
Taking $x = -z_T$, immediately shows that 

 \begin{equation}
\begin{split}
  \underset{x\in \Sigma}{\sup} \frac{-\re \ls x,z \rs}{\|x\|_\sH \sqrt{ \|x+z\|_\Sigma^2 -  \|x\|_\sH^2 - 2\re \ls x,z \rs}} \geq  \frac{1}{ \sqrt{ \underset{z\in \sT_R(\Sigma) \setminus \{0\} }{\sup}  \frac{\|z_{T^c}\|_\Sigma^2}{\|z_T\|_2^2} + 1 }}. \\
 \end{split}
\end{equation}
Let $x$ such that $supp(x) = H$. Let $u_i\in \Sigma,\lambda_i\geq0$ such that $\sum_i \lambda_i =1$, $\sum_i \lambda_i u_i = x+z$ and $\|x+z\|_\Sigma^2 = \sum \lambda_i \|u_i\|^2$ (from~\cite{Traonmilin_2016}[Fact 2.1]  $\|v|_\Sigma$ is the infimum of this expression over the convex  decompositions $\lambda_i,u_i$ of $v$). Using the fact that $\|v\|_\Sigma^2 =\|v\|_2^2$ for $v \in \Sigma$, we have 
\begin{equation}
\begin{split}
 \|z_{H^c}\|_\Sigma^2  +  \|x_H+z_H\|_2^2 &= \|z_{H^c}\|_\Sigma^2  +  \|x_H+z_H\|_\Sigma^2 \leq \sum_i \lambda_i \|u_{i,H^c}\|_2^2 + \|\sum_i\lambda_i u_{i,H}\|_2^2\\
  &\leq \sum_i \lambda_i \|u_{i,H^c}\|_2^2 + \sum_i\lambda_i \| u_{i,H}\|_2^2\\
   &= \sum_i \lambda_i \|u_{i}\|_2^2  = \|x+z\|_\Sigma^2.  \\
 \end{split}
\end{equation}
We have 

\begin{equation}
\begin{split}
 \|z_{H^c}\|_\Sigma^2  +  \|x_H+z_H\|_2^2 &\leq \|x+z\|_\Sigma^2 \\
  \|z_{H^c}\|_\Sigma^2  +   \|z_H\|_2^2  &\leq \|x+z\|_\Sigma^2 - \|x_H+z_H\|_2^2 + \|z_H\|_2^2\\
   \|z_{H^c}\|_\Sigma^2  +   \|z_H\|_2^2  &\leq \|x+z\|_\Sigma^2 - \|x_H\|_2^2 -2\re \ls x_H, z_H \rs \\
   \|z_{H^c}\|_\Sigma^2  +   \|z_H\|_2^2  &\leq \|x+z\|_\Sigma^2 - \|x_H\|_2^2 -2\re \ls x_H, z\rs \\
 \end{split}
\end{equation}
Using Cauchy-Schwartz inequality, we get 

\begin{equation}
 \begin{split}
  \re \ls x_H, z\rs^2  \left(\|z_{H^c}\|_\Sigma^2  +   \|z_H\|_2^2 \right) &\leq \|x_H\|_2^2\|z_H\|_2^2\left(\|x+z\|_\Sigma^2 - \|x_H\|_2^2 -2\re \ls x_H, z\rs\right) \\
  \frac{\re \ls x_H, z\rs^2}{ \|x_H\|_2^2\left(\|x+z\|_\Sigma^2 - \|x_H\|_2^2 -2\re \ls x_H, z\rs\right) }  &\leq\frac{\|z_H\|_2^2}{\left(\|z_{H^c}\|_\Sigma^2  +   \|z_H\|_2^2 \right)} =  \frac{1}{\frac{z_{H^c}}{\|z_{H}\|_2^2}  +1} \leq \frac{1}{\frac{\|z_{T^c}\|_\Sigma^2}{\|z_{T}\|_2^2}  +1}. \\
  \end{split}
 \end{equation}
where the last inequality comes from the fact that $\|z_{H}\|_2^2 \leq \|z_{T}\|_2^2 =\|z_{T}\|_\Sigma^2$ and $\|z_{T^c}\|_\Sigma^2 \leq \|z_{H^c}\|_\Sigma^2$ (see Lemma~\ref{lem:sigma_norm_increasing2}). 
 
This shows that 

 \begin{equation}
\begin{split}
  \underset{x\in \Sigma}{\sup} \frac{-\re \ls x,z \rs}{\|x\|_\sH \sqrt{ \|x+z\|_\Sigma^2 -  \|x\|_\sH^2 - 2\re \ls x,z \rs}} \leq  \frac{1}{ \sqrt{ \underset{z\in \sT_R(\Sigma) \setminus \{0\} }{\sup}  \frac{\|z_{T^c}\|_\Sigma^2}{\|z_T\|_2^2} + 1 }}. \\
 \end{split}
\end{equation}
\end{proof}

\begin{lemma}\label{lem:const_max_sigmaw_l1} We have 
 \begin{equation}
   \lambda^2 = D_1 := \sup \|z\|_\Sigma^2 \; s.t.  \; \|z\|_1 = \lambda.
 \end{equation}
 The sup is reached for any $z^*$ such that $|supp(z^*)|=1$.
\end{lemma}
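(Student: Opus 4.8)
The plan is to sandwich the atomic norm $\|\cdot\|_\Sigma$ between $\|\cdot\|_2$ and $\|\cdot\|_1$ and read off both the value $\lambda^2$ and the maximizers from the endpoints of this sandwich. Concretely, I would first establish the upper bound $\|z\|_\Sigma\le\|z\|_1$ for all $z$, then identify the equality case as exactly the $1$-sparse vectors.

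For the upper bound I would argue at the level of the defining convex body rather than through atomic decompositions. Recall $\Sigma=\Sigma_k$ with $k\ge 1$, so every signed basis vector $\pm e_i$ is $1$-sparse and has unit $\ell^2$-norm, whence $\pm e_i\in\Sigma\cap S(1)$ and therefore
\begin{equation}
\tconv\{\pm e_i\}_{i=1}^n \subseteq \tconv\left(\Sigma\cap S(1)\right).
\end{equation}
Since $\|\cdot\|_\Sigma$ is the gauge of the right-hand set, while $\|\cdot\|_1$ is the gauge of the cross-polytope $\tconv\{\pm e_i\}$ on the left, and since the gauge of a larger convex body is pointwise smaller, this inclusion yields $\|z\|_\Sigma\le\|z\|_1$ for every $z$. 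Consequently, for any $z$ with $\|z\|_1=\lambda$ we get $\|z\|_\Sigma^2\le\lambda^2$, i.e. $D_1\le\lambda^2$.

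For attainment I would use the fact $\|v\|_\Sigma=\|v\|_2$ for $v\in\Sigma$, already invoked in the proof of Proposition~\ref{prop:charac_suff_RIP} and coming from the properties of $\|\cdot\|_\Sigma$ recalled from \cite{Traonmilin_2016,Argyriou_2012}. Any $z^*$ with $|\supp(z^*)|=1$ and $\|z^*\|_1=\lambda$ is of the form $z^*=\pm\lambda e_i$, which is $1$-sparse and hence lies in $\Sigma$; therefore $\|z^*\|_\Sigma^2=\|z^*\|_2^2=\lambda^2=\|z^*\|_1^2$. This simultaneously shows that the upper bound is achieved, so $D_1=\lambda^2$, and that each such $z^*$ is a maximizer. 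The positive $1$-homogeneity of both $\|\cdot\|_\Sigma$ and $\|\cdot\|_1$ confirms that restricting the constraint to the equality $\|z\|_1=\lambda$ (rather than $\le\lambda$) is harmless. I do not anticipate a real obstacle: the entire statement collapses to the monotonicity of gauges under the inclusion of atom sets together with the identity $\|\cdot\|_\Sigma=\|\cdot\|_2$ on $\Sigma$; the only point deserving a clean justification is deriving $\|z\|_\Sigma\le\|z\|_1$ directly from the convex-hull/gauge definition, which the inclusion above makes transparent.
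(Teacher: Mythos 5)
Your proof is correct and follows essentially the same route as the paper's: the upper bound $D_1\le\lambda^2$ from $\|z\|_\Sigma\le\|z\|_1$, and attainment at $1$-sparse vectors via $\|z^*\|_\Sigma^2=\|z^*\|_2^2=\lambda^2$. The only difference is that you justify the inequality $\|z\|_\Sigma\le\|z\|_1$ explicitly through the inclusion of atom sets and monotonicity of gauges, whereas the paper simply invokes it as a known property of the atomic norm.
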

\begin{proof}
 Let an admissible $z^*$ such that $|supp(z^*)|=1 $. We have $\|z^*\|_\Sigma^2 = \|z^*\|_2^2 =\lambda^2 $ and $\|z^*\|_1 = \lambda$ hence $D_1 \geq  \lambda^2 $. Moreover, for any $z$ satisfying $\|z\|_1 = \lambda$, $\|z\|_\Sigma^2 \leq \|z\|_1^2   = \lambda^2$.
\end{proof}

To replicate the proof used in the necessary case, we show a monotony property of $\|\cdot\|_\Sigma$.
\begin{lemma}\label{lem:sigma_norm_increasing}
let $v,v' \in \sH$  such that there is $j_0$ such that $|v_{j_0}|\leq |v_{j_0}'| $ and for $j \neq j_0$, $|v_{j}|= |v_{j}'| $. Then $\|v\|_\Sigma \leq\|v'\|_\Sigma$.
\end{lemma}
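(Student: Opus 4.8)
The plan is to exploit the fact that $\|\cdot\|_\Sigma$ is an \emph{absolute} norm --- one whose value depends only on the coordinatewise absolute values of its argument --- and to deduce the claimed monotonicity from a duality argument. This is really an instance of the classical principle that absolute norms on $\bR^n$ are monotone; I would give a self-contained proof adapted to the atomic norm $\|\cdot\|_\Sigma$, using that it is a genuine norm (as recalled from \cite{Traonmilin_2016, Argyriou_2012}).

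First I would establish absoluteness. For a sign pattern $\epsilon \in \{-1,1\}^n$ let $D_\epsilon$ denote the diagonal sign flip $(D_\epsilon v)_j = \epsilon_j v_j$. Since $\Sigma = \Sigma_k$ and the sphere $S(1)$ are both invariant under every $D_\epsilon$, the atom set $\sA = \Sigma \cap S(1)$ satisfies $D_\epsilon \sA = \sA$; hence $\tconv(\sA)$, and therefore the gauge $\|\cdot\|_\Sigma$ it induces, are sign-invariant, i.e.\ $\|D_\epsilon v\|_\Sigma = \|v\|_\Sigma$ for all $v$ and $\epsilon$. In particular $\|v\|_\Sigma = \| |v| \|_\Sigma$, where $|v|$ is the coordinatewise absolute value. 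Applying sign flips to both $v$ and $v'$, it then suffices to treat the case $v = |v| \geq 0$, $v' = |v'| \geq 0$, with $v_j = v_j'$ for $j \neq j_0$ and $0 \leq v_{j_0} \leq v_{j_0}'$.

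Next I would pass to the dual. As $\|\cdot\|_\Sigma$ is a norm on the finite-dimensional space $\bR^n$, one has $\|v\|_\Sigma = \sup_{\|u\|_\Sigma^* \leq 1} \ls u, v\rs$, the supremum being attained by compactness of the dual unit ball. The dual norm is itself absolute: since it is the support function of $\sA$, a direct computation gives $\|u\|_\Sigma^* = \sup_{a \in \sA} \ls u, a\rs = \|u_{(k)}\|_2$, the $\ell^2$-norm of the $k$ largest-magnitude entries of $u$, which visibly depends only on $|u|$. Let $u^*$ attain $\|v\|_\Sigma = \ls u^*, v\rs$ with $\|u^*\|_\Sigma^* \leq 1$. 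Replacing $u^*$ by $|u^*|$ leaves $\|u^*\|_\Sigma^*$ unchanged and, because $v \geq 0$, cannot decrease $\ls u^*, v\rs$, so I may assume $u^* \geq 0$. Then, using $v_j' \geq v_j$ coordinatewise together with $u_j^* \geq 0$,
\begin{equation*}
\|v'\|_\Sigma \geq \ls u^*, v'\rs = \sum_{j} u_j^* v_j' \geq \sum_{j} u_j^* v_j = \ls u^*, v\rs = \|v\|_\Sigma ,
\end{equation*}
which is the desired inequality.

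The only genuinely non-trivial ingredient is the absoluteness of the dual norm, together with the reduction to a nonnegative dual maximizer; everything else is formal. I expect the cleanest justification to be the explicit identity $\|u\|_\Sigma^* = \|u_{(k)}\|_2$, which makes sign-invariance of $\|\cdot\|_\Sigma^*$ manifest. Alternatively one can obtain it abstractly from $\|D_\epsilon u\|_\Sigma^* = \|u\|_\Sigma^*$, which follows at once from the sign-invariance of the primal norm via the substitution $w = D_\epsilon v$ in the definition of the dual norm.
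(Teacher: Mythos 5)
Your proof is correct, but it takes a genuinely different route from the paper. The paper stays entirely on the primal side: it invokes the characterization $\|v'\|_\Sigma^2 = \inf \sum_i \lambda_i \|u_i'\|_2^2$ over convex decompositions of $v'$ into atoms (Fact 2.1 of \cite{Traonmilin_2016}), takes a near-optimal decomposition of $v'$, truncates/modifies its $j_0$-th coordinate to obtain a feasible decomposition of $v$, and compares the two objective values term by term. You instead run the classical Bauer--Stoer--Witzgall argument that an \emph{absolute} norm is monotone: sign-invariance of the atom set $\Sigma_k \cap S(1)$ gives $\|v\|_\Sigma = \| |v| \|_\Sigma$, the dual norm is the support function of the atoms and computes explicitly as $\|u\|_\Sigma^* = \|u_{(k)}\|_2$ (hence is itself absolute), and a nonnegative dual maximizer $u^*$ for $v$ yields $\|v'\|_\Sigma \geq \ls u^*, v'\rs \geq \ls u^*, v\rs = \|v\|_\Sigma$. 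All the steps check out: the bidual identity is valid since $\|\cdot\|_\Sigma$ is a genuine norm on $\bR^n$, and replacing $u^*$ by $|u^*|$ is legitimate once $v \geq 0$. What your route buys is that it proves the stronger coordinatewise statement (the paper's Lemma~\ref{lem:sigma_norm_increasing2}, $|v_j| \leq |v_j'|$ for all $j$) in one shot, making the paper's subsequent one-coordinate-at-a-time induction unnecessary, and it applies verbatim to any atomic norm whose atom set is sign-invariant; the cost is that you need the dual-norm identity (or at least its sign-invariance), whereas the paper's argument uses only the decomposition characterization it has already set up for the proof of Proposition~\ref{prop:charac_suff_RIP}.
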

\begin{proof}
 Let $\lambda_i,u_i$ such that $u_i'\in \Sigma,\lambda_i$ such that $\sum_i \lambda_i =1$, $\sum_i \lambda_i u_i' = v'$ and $\|v'\|_\Sigma^2 = \sum \lambda_i \|u_i'\|^2$ (from~\cite{Traonmilin_2016}[Fact 2.1]  $\|v'\|_\Sigma$ is the infimum of this expression over such convex  decompositions $\lambda_i,u_i'$ of $v'$). Let $H = supp(v)\setminus j_0$.
 
 Leveraging \cite{Traonmilin_2016}[Fact 2.1] for $v$, we have 
 \begin{equation}
 \begin{split}
  \|v\|_\Sigma^2 &\leq \sum \lambda_i \|u_{i,H}'\|_2^2+ \sum \lambda_i \|v_{j_0}\|_2^2 \\
    &\leq  \sum \lambda_i \|u_{i,H}'\|_2^2+ \sum \lambda_i \|v'_{j_0}\|_2^2 \\
  &\leq  \sum \lambda_i \|u_{i,H}'\|_2^2+ \sum \lambda_i \|u'_{i,j_0}\|_2^2 \\
  &= \sum \lambda_i \|u_i'\|_2^2 = \|v'\|_\Sigma^2.
  \end{split}
 \end{equation}
\end{proof}

A consequence of this Lemma is 
\begin{lemma}\label{lem:sigma_norm_increasing2}
Let $v,v'$  such that for all $j $, $|v_{j}|\leq |v_{j}'| $. Then $\|v\|_\Sigma \leq\|v'\|_\Sigma$.
\end{lemma}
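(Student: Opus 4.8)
The plan is to deduce Lemma~\ref{lem:sigma_norm_increasing2} from Lemma~\ref{lem:sigma_norm_increasing} by a coordinate-by-coordinate telescoping argument. Since we work in $\sH = \bR^n$, there are only finitely many coordinates to modify, so a finite induction on the number of coordinates where $v$ and $v'$ differ suffices, each step changing exactly one coordinate and invoking the single-coordinate monotonicity already established.

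First I would introduce intermediate vectors interpolating between $v$ and $v'$. Concretely, for $k \in \{0,1,\dots,n\}$ define $v^{(k)}$ by $v^{(k)}_j = v'_j$ for $j \leq k$ and $v^{(k)}_j = v_j$ for $j > k$, so that $v^{(0)} = v$ and $v^{(n)} = v'$. By construction, consecutive vectors $v^{(k-1)}$ and $v^{(k)}$ agree on every coordinate except the $k$-th, where $|v^{(k-1)}_k| = |v_k| \leq |v'_k| = |v^{(k)}_k|$ by the hypothesis $|v_j| \leq |v'_j|$.

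Next I would apply Lemma~\ref{lem:sigma_norm_increasing} to each consecutive pair, taking $j_0 = k$, to obtain $\|v^{(k-1)}\|_\Sigma \leq \|v^{(k)}\|_\Sigma$ for every $k$. Chaining these inequalities yields $\|v\|_\Sigma = \|v^{(0)}\|_\Sigma \leq \|v^{(1)}\|_\Sigma \leq \dots \leq \|v^{(n)}\|_\Sigma = \|v'\|_\Sigma$, which is exactly the claim.

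There is no genuine obstacle here: the analytic content of the monotonicity is carried entirely by Lemma~\ref{lem:sigma_norm_increasing} (which itself rests on the convex-decomposition characterization of $\|\cdot\|_\Sigma$ from \cite{Traonmilin_2016}), and the only thing to verify is that the interpolation alters a single coordinate at a time, in absolute value, in the correct direction. The one point worth stating explicitly is that finiteness of the index set guarantees the chain of inequalities terminates after $n$ steps; in an infinite-dimensional setting one would need an additional limiting argument, but that is unnecessary in the present setting $\sH = \bR^n$.
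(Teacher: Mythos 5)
Your proof is correct and is essentially identical to the paper's own argument: both interpolate between $v$ and $v'$ by changing one coordinate at a time and chain applications of Lemma~\ref{lem:sigma_norm_increasing} (the paper merely runs the chain from $v'$ down to $v$ rather than from $v$ up to $v'$). No issues.
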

\begin{proof}
Consider for $r=0,n$ (We work in $\bR^n$) a vector $w^r$  such that $w^0 = v'$, $w^r_i = v_i$ for $ i= 1,r$ and  $w^r(i) = v_i'$ otherwise. With Lemma~\ref{lem:sigma_norm_increasing} we have $\|v'\|_\Sigma =\|w^0\|_\Sigma \geq \|w^1\|_\Sigma...  \geq\|w^n\|_\Sigma=\|v\|_\Sigma $. 

\end{proof}

\begin{lemma} \label{lem:optimal_support_nsigma}
Let $\Sigma = \Sigma_k$ the set of $k$-sparse vectors. Let $R = \|\cdot\|_w \in \sC$. We have
\begin{equation}
D'(R) := \sup_{ z \in \sT_R(\Sigma) \setminus \{0\} } \frac{\|z_{T^c}\|_\Sigma^2}{\|z_{T}\|_2^2}  =  \sup_{ z \neq 0 :\|z_{H_0}\|_w\geq\|z_{H_0^c}\|_w} \frac{\|z_{T^c}\|_\Sigma^2}{\|z_{T}\|_2^2}.\\
\end{equation}
where $H_0$ is the support of the $k$ biggest weights: for all $i \in H_0,j \notin H_0$ $w_i \geq w_j$. 
\end{lemma}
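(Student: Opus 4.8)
The plan is to follow verbatim the template of the proof of Lemma~\ref{lem:optimal_support_l2}, simply replacing the ratio $\|z_{T_2^c}\|_2^2/\|z_{T_2}\|_2^2$ by $\|z_{T^c}\|_\Sigma^2/\|z_T\|_2^2$. First I would rewrite the descent set of a weighted $\ell^1$-norm over $\Sigma_k$ as a union over supports. Since $z\in\sT_{\|\cdot\|_w}(\Sigma)$ iff there is a $k$-sparse $x$ with $\|x+z\|_w\leq\|x\|_w$, and the infimum over $x$ supported on $H$ of $\|x+z\|_w-\|x\|_w$ equals $\|z_{H^c}\|_w-\|z_H\|_w$, one gets
\[
\sT_{\|\cdot\|_w}(\Sigma)=\bigcup_{H:\,|H|\leq k}\left\{z:\|z_H\|_w\geq\|z_{H^c}\|_w\right\},
\]
exactly as used at the start of the proof of Lemma~\ref{lem:optimal_support_l2}. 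Hence
\[
D'(R)=\sup_{H:\,|H|\leq k}\ \sup_{z\neq 0:\,\|z_H\|_w\geq\|z_{H^c}\|_w}\frac{\|z_{T^c}\|_\Sigma^2}{\|z_T\|_2^2},
\]
and the claim reduces to showing that the outer supremum is attained at $H=H_0$. The inequality $\geq$ is immediate since $H_0$ is an admissible support ($|H_0|=k$), so the real content is the reverse inequality.

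The key observation I would isolate is that the objective ratio is invariant under any permutation of the coordinates of $z$. Indeed $\|z_T\|_2^2$ is the sum of the squares of the $k$ largest entries of $|z|$, and $\|z_{T^c}\|_\Sigma^2$ depends only on the multiset of the remaining entries: the norm $\|\cdot\|_\Sigma$ is the atomic norm generated by $\Sigma_k\cap S(1)$, and since a coordinate permutation preserves both $\Sigma_k$ and the $\ell^2$-sphere $S(1)$, it preserves this atom set and hence $\|\cdot\|_\Sigma$ itself. Consequently the ratio depends only on the sorted sequence of the $|z_i|$, and in particular it does not depend on the choice of $H$.

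With these two ingredients I would run the rearrangement step. Given $H$ with $|H|\leq k$ and $z$ satisfying $\|z_H\|_w\geq\|z_{H^c}\|_w$, I may assume $|H|=k$ (moving a coordinate from $H^c$ into $H$ increases $\|z_H\|_w$, decreases $\|z_{H^c}\|_w$, hence preserves the constraint, and leaves the objective untouched). I then take the permutation $z'$ carrying the values on $H$ onto $H_0$ and those on $H^c$ onto $H_0^c$, pairing larger magnitudes with larger weights. Because $H_0$ collects the $k$ largest weights, the rearrangement inequality yields $\|z'_{H_0}\|_w\geq\|z_H\|_w$ and, symmetrically, $\|z'_{H_0^c}\|_w\leq\|z_{H^c}\|_w$; chaining gives $\|z'_{H_0}\|_w\geq\|z'_{H_0^c}\|_w$. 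Since the objective is unchanged by the permutation, every admissible pair $(H,z)$ is matched by an admissible $z'$ attached to $H_0$ with identical objective value, so the two suprema coincide, which is the desired equality.

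The main obstacle is the rearrangement estimate $\|z'_{H_0}\|_w\geq\|z_H\|_w$ (and its complementary counterpart on $H_0^c$): it requires matching the weights on $H$ injectively to no-smaller weights on $H_0$ and then invoking the rearrangement inequality to align magnitudes with weights, relying crucially on the fact that $H_0$ is a set of $k$ largest weights. Once this and the permutation-invariance of $\|\cdot\|_\Sigma$ are in place, the conclusion is immediate, and the argument becomes a faithful transcription of the proof of Lemma~\ref{lem:optimal_support_l2}.
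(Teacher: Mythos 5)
Your proposal is correct and follows exactly the paper's route: the paper proves this lemma by observing that $\|\cdot\|_\Sigma$ is permutation-invariant and then reusing verbatim the support-decomposition and rearrangement argument of Lemma~\ref{lem:optimal_support_l2}. You merely spell out the details (the union-over-supports description of $\sT_R(\Sigma)$, why the atomic norm is permutation-invariant, and the weight-matching behind $\|z'_{H_0}\|_w\geq\|z_H\|_w$) that the paper leaves implicit.
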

\begin{proof}
Using the fact that $\|z\|_\Sigma$ is invariant by permutation of coordinates $z$, the proof of Lemma~\ref{lem:optimal_support_l2} is valid in this case.
\end{proof}

\begin{lemma} \label{lem:charact_supDp_final}
Let $\Sigma = \Sigma_k$ the set of $k$-sparse vectors. Let $R =\|\cdot\|_w \in \sC$ such that $\|w_{H_0}\|_1<\|w_{H_0^c}\|_1$. We have 
\begin{equation}
D_\Sigma(R) = \sup_{ z \in \sT_R(\Sigma) \setminus \{0\} } \frac{\|z_{T^c}\|_\Sigma^2}{\|z_{T}\|_2^2}  =  \sup_{z \neq 0 : \|z_{H_0}\|_w=\|z_{H_0^c}\|_w} \frac{\|z_{T^c}\|_\Sigma^2}{\|z_{T}\|_2^2}. \\
\end{equation}
where $H_0$ is the support of the $k$ biggest weights.
\end{lemma}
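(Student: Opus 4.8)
The plan is to replicate, \emph{mutatis mutandis}, the proof of Lemma~\ref{lem:charact_supBp_final}, which established the analogous reduction for the necessary-condition quantity $B_\Sigma(R)$. The only structural differences are that the numerator now involves the atomic norm $\|\cdot\|_\Sigma$ rather than the Euclidean norm, and that the relevant support is $T$ (the $k$ largest coordinates) rather than $T_2$. Consequently the role played in the necessary case by the trivial coordinatewise monotonicity of $\|\cdot\|_2$ will here be played by Lemma~\ref{lem:sigma_norm_increasing} (equivalently Lemma~\ref{lem:sigma_norm_increasing2}), while the permutation invariance of $\|\cdot\|_\Sigma$ legitimizes the first reduction.

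First I would invoke Lemma~\ref{lem:optimal_support_nsigma} to rewrite $D_\Sigma(R) = \sup_{z \neq 0 : \|z_{H_0}\|_w \geq \|z_{H_0^c}\|_w} \frac{\|z_{T^c}\|_\Sigma^2}{\|z_T\|_2^2}$, so that the supremum is already taken over the relaxed (inequality) constraint. Since the equality set $\{z : \|z_{H_0}\|_w = \|z_{H_0^c}\|_w\}$ is contained in $\{z : \|z_{H_0}\|_w \geq \|z_{H_0^c}\|_w\}$, one inequality is immediate, and it remains only to establish the reverse: for every admissible $z$ with a strict gap $\beta := \|z_{H_0}\|_w - \|z_{H_0^c}\|_w > 0$, I must exhibit a $z'$ saturating the constraint whose objective $\|z'_{T'^c}\|_\Sigma^2 / \|z'_{T'}\|_2^2$ (with $T'$ the $k$-largest support of $z'$) is at least that of $z$.

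The construction is the flattening procedure of Lemma~\ref{lem:charact_supBp_final}. The hypothesis $\|w_{H_0}\|_1 < \|w_{H_0^c}\|_1$ forbids $z$ from being constant in magnitude, since a constant vector would satisfy $\|z_{H_0}\|_w < \|z_{H_0^c}\|_w$. Writing $\alpha := \min_{i \in T} |z_i|$, there is therefore either a coordinate $i_0 \in T^c$ with $|z_{i_0}| < \alpha$, which I raise towards $\alpha$, or a coordinate $i_0 \in T$ with $|z_{i_0}| > \alpha$, which I lower towards $\alpha$. In the first case the raised coordinate, being capped at $\alpha$, stays in $T^c$, so that $T$ is unchanged, the denominator is unchanged, and the numerator does not decrease by Lemma~\ref{lem:sigma_norm_increasing} applied to $z_{T^c}$; in the second case the lowered coordinate, being floored at $\alpha$, stays in $T$, leaving the numerator unchanged while decreasing the denominator. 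Either way the ratio does not decrease. Choosing at each step a coordinate whose adjustment reduces the weighted gap, and capping/flooring each adjustment at $\alpha$, I either close the gap exactly or saturate one more coordinate at $\alpha$; termination then follows from the same induction on the number of off-threshold coordinates as in Lemma~\ref{lem:charact_supBp_final}, producing a $z'$ with $\|z'_{H_0}\|_w = \|z'_{H_0^c}\|_w$ and no loss in the objective.

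I expect the main subtlety to be precisely the point where the Euclidean argument must be upgraded: because $\|\cdot\|_\Sigma$ is not separable across coordinates, one cannot simply read off that raising a single coordinate increases the numerator. This is exactly what Lemma~\ref{lem:sigma_norm_increasing} supplies, and the permutation invariance underlying Lemma~\ref{lem:optimal_support_nsigma} is what allowed the reduction to the support $H_0$ in the first place. Granting these two facts about $\|\cdot\|_\Sigma$, the flattening argument transfers verbatim from the necessary case, and the two suprema coincide.
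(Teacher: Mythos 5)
Your proposal matches the paper's proof exactly: the paper's argument for Lemma~\ref{lem:charact_supDp_final} is precisely to transfer the flattening construction of Lemma~\ref{lem:charact_supBp_final}, with Lemma~\ref{lem:sigma_norm_increasing} supplying the coordinatewise monotonicity that $\|\cdot\|_\Sigma$ needs in place of the Euclidean norm, and Lemma~\ref{lem:optimal_support_nsigma} handling the reduction to $H_0$. You have simply written out in detail what the paper states in one line, correctly identifying the only point where the earlier proof needs upgrading.
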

\begin{proof}
Using Lemma~\ref{lem:sigma_norm_increasing} the proof of Lemma~\ref{lem:charact_supBp_final} holds in this case.
\end{proof}

We calculate  tools to obtain $D_\Sigma(\|\cdot\|_1)$ the case of the $\ell^1$-norm.
\begin{lemma}\label{lem:sup_DL_l1}
\begin{equation}
D_L(\|\cdot\|_1) :=  \sup_{ z \neq 0 : \|z_{H_0}\|_1=\|z_{H_0^c}\|_1 , |supp(z)| =k+L}  \frac{\|z_{T^c}\|_\Sigma^2}{\|z_{T}\|_2^2}   = \min \left(1,\frac{L}{k}\right)\\
\end{equation}
\end{lemma}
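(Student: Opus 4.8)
The plan is to prove the two matching bounds $D_L(\|\cdot\|_1) \ge \min(1,L/k)$ and $D_L(\|\cdot\|_1) \le \min(1,L/k)$, treating the regimes $L \le k$ and $L \ge k$ separately, and relying throughout on two features of the atomic norm $\|\cdot\|_\Sigma$ generated by $\Sigma_k \cap S(1)$ (the $k$-support norm, see~\cite{Argyriou_2012}): that $\|u\|_\Sigma = \|u\|_2$ whenever $|\supp(u)| \le k$, and that for a flat vector $u$ (all nonzero entries equal in modulus) supported on $p \ge k$ coordinates one has $\|u\|_\Sigma = \|u\|_1/\sqrt k$ (both follow from Lemma~\ref{lem:const_max_sigmaw_l1} and the monotonicity Lemma~\ref{lem:sigma_norm_increasing2}). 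First I would record that any $z$ satisfying $\|z_{H_0}\|_1 = \|z_{H_0^c}\|_1$ lies in the $\ell^1$ descent cone: since $T$ is the support of the $k$ largest coordinates, $\|z_T\|_1 \ge \|z_{H_0}\|_1 = \tfrac12\|z\|_1 \ge \|z_{T^c}\|_1$, so the inequality $\|z_{T^c}\|_1 \le \|z_T\|_1$ is available for free.

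For the lower bound I would exhibit explicit extremal vectors, exactly in the spirit of the optimal vectors used in the necessary case. When $L \ge k$, take $z$ equal to $\beta$ on the $k$ coordinates of $H_0 = T$ and to $\alpha$ on $L$ further coordinates with $k\beta = L\alpha$ (so the balance constraint holds and $\alpha \le \beta$); then $\|z_T\|_2^2 = k\beta^2$ and, the tail being flat on $L \ge k$ coordinates, $\|z_{T^c}\|_\Sigma^2 = L^2\alpha^2/k = k\beta^2$, giving ratio exactly $1$. When $L \le k$, take $z$ flat on $k+L$ coordinates, split as $(k+L)/2$ inside $H_0$ and $(k+L)/2$ outside (feasible since $L \le k$ and $n \ge 2k$; parity handled by a limiting vector), so that $z_{T^c}$ is $L$-sparse and $\|z_{T^c}\|_\Sigma^2 = \|z_{T^c}\|_2^2 = La^2$ while $\|z_T\|_2^2 = ka^2$, giving ratio $L/k$. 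This yields $D_L(\|\cdot\|_1) \ge \min(1,L/k)$.

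For the upper bound, fix any feasible $z$, set $m := \min_{i\in T}|z_i| \ge \|z_{T^c}\|_\infty$, and split into the two regimes. If $L \le k$, then $z_{T^c}$ is $L$-sparse, so $\|z_{T^c}\|_\Sigma = \|z_{T^c}\|_2 \le \sqrt L\,m$ while $\|z_T\|_2 \ge \sqrt k\,m$, whence the ratio $\|z_{T^c}\|_\Sigma^2/\|z_T\|_2^2$ is at most $L/k$ (this bound in fact holds for every $z$ with $|\supp z| = k+L$, not only feasible ones). If $L \ge k$, the key step is the sharp estimate $\|u\|_\Sigma^2 \le \max\!\big(k\|u\|_\infty^2,\ \|u\|_1^2/k\big)$, which I would prove from the thresholding form of the $k$-support norm: writing $\|u\|_\Sigma^2 = \sum_{i\le p} u_{(i)}^2 + \tfrac1{k-p}\big(\sum_{i>p} u_{(i)}\big)^2$ with $p$ the number of head coordinates, one gets $\|u\|_\Sigma^2 \le km^2$ when $p \ge 1$ (using $u_{(p)} \le \|u\|_\infty = m$ together with the threshold inequality $\tfrac1{k-p}\sum_{i>p}u_{(i)} \le u_{(p)}$), whereas $\|u\|_\Sigma^2 = \|u\|_1^2/k$ when $p = 0$. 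Applying this to $u = z_{T^c}$ gives $\|z_{T^c}\|_\Sigma^2 \le \max(km^2,\|z_{T^c}\|_1^2/k)$; since $km^2 \le \|z_T\|_2^2$ (each of the $k$ coordinates of $z_T$ has modulus $\ge m$) and $\|z_{T^c}\|_1^2/k \le \|z_T\|_1^2/k \le \|z_T\|_2^2$ (descent-cone inequality and Cauchy--Schwarz), both terms are bounded by $\|z_T\|_2^2$, so the ratio is at most $1$. Combining the regimes yields $D_L(\|\cdot\|_1) \le \min(1,L/k)$, matching the lower bound.

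I expect the main obstacle to be precisely this sharp control of the $k$-support norm on the tail $z_{T^c}$ in the regime $L \ge k$: the naive bound $\|z_{T^c}\|_\Sigma \le \|z_{T^c}\|_1$ of Lemma~\ref{lem:const_max_sigmaw_l1} is far too weak (it only yields ratio $\le k$), so one genuinely needs the $\max(k\|u\|_\infty^2,\|u\|_1^2/k)$ estimate, whose proof must carefully separate the degenerate case $p=0$ (a fully spread tail, where equality is attained) from $p \ge 1$. The remaining ingredients — the descent-cone reduction, the flat-vector evaluations of $\|\cdot\|_\Sigma$, and the explicit extremal constructions — are routine once this estimate is established.
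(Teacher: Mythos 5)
Your proof is correct and, for the lower bounds and for the $L\le k$ regime, follows the paper's argument exactly: the same two-level flat vector with $k\beta=L\alpha$ realizes the value $1$ when $L\ge k$; a flat vector on $k+L$ coordinates realizes $L/k$ when $L\le k$ (your even split of the support across $H_0$ and $H_0^c$ is in fact more careful than the paper's bare $(1,\dots,1,0,\dots,0)$, which satisfies the balance constraint $\|z_{H_0}\|_1=\|z_{H_0^c}\|_1$ only once so arranged); and the upper bound $\|z_{T^c}\|_\Sigma^2=\|z_{T^c}\|_2^2\le (L/k)\|z_T\|_2^2$ for $L\le k$ is the paper's as well. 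The genuine divergence is the bound $\le 1$ in the regime $L\ge k$: the paper imports it wholesale from \cite{Traonmilin_2016}[Theorem 4.1], whereas you reprove it from the thresholding form of the $k$-support norm via the estimate $\|u\|_\Sigma^2\le\max\bigl(k\|u\|_\infty^2,\ \|u\|_1^2/k\bigr)$ --- which checks out: for $p\ge1$ the head contributes at most $pm^2$ and the threshold inequality bounds the tail term by $(k-p)m^2$, while $p=0$ gives exactly $\|u\|_1^2/k$ --- and you then close with $km^2\le\|z_T\|_2^2$, the descent-cone inequality $\|z_{T^c}\|_1\le\|z_T\|_1$ coming from the balance constraint, and Cauchy--Schwarz. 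Your route costs one extra structural lemma on $\|\cdot\|_\Sigma$ but yields a self-contained proof of the nontrivial half of the lemma; the paper's route is shorter but leaves that step buried in a citation. One small nit: the flat-vector evaluation $\|u\|_\Sigma=\|u\|_1/\sqrt{k}$ for $|\supp(u)|\ge k$ does not follow from Lemmas~\ref{lem:const_max_sigmaw_l1} and~\ref{lem:sigma_norm_increasing2} as you claim (the former only gives $\|u\|_\Sigma\le\|u\|_1$); it is \cite{Argyriou_2012}[Proposition 2.1], which is what the paper cites and which your own thresholding formula also delivers, so nothing is actually missing.
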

\begin{proof}
It was already proven in \cite{Traonmilin_2016}[Theorem 4.1]that  
  \begin{equation}
   \sup_{ z \neq 0 : \|z_{H_0}\|_1=\|z_{H_0^c}\|_1 , |supp(z)| =2k+L}  \frac{\|z_{T^c}\|_\Sigma^2}{\|z_{T}\|_2^2} \leq 1
 \end{equation}
 
 If $L \geq k$, Take $z$ such that $z_{H_0}= ( \beta,..,\beta) $ and $z_{H_0^c \cap supp(z)}= ( \alpha,..,\alpha) $ with  $\beta = \frac{L}{k}\alpha \geq \alpha$, we have $T = H_0$.

From \cite{Argyriou_2012}[Proposition 2.1], $\|z_{T^c}\|_\Sigma^2 = \frac{1}{k}(\|z_{H_0^c \cap supp(z)}\|_1)^2 = L^2\alpha^2/k$. Moreover $\|z_{T}\|_2^2 =k\beta^2 = L^2\alpha^2/k $, thus
 \begin{equation}
    \frac{\|z_{T^c}\|_\Sigma^2}{\|z_{T}\|_2^2} = 1.
 \end{equation}
 
 If $ L<k$, we have $\|z_{T^c}\|_\Sigma^2 = \|z_{T^c}\|_2^2$ and
 \begin{equation}
    \frac{\|z_{T^c}\|_\Sigma^2}{\|z_{T}\|_2^2} \leq L/k.
 \end{equation}
 taking $z = (1,1,1..1,0,0,0)$ shows that this bound is reached.

\end{proof}
We bound $D_\Sigma(R)$ for norms not too far from the $\ell^1$-norm.

\begin{lemma}\label{lem:charact_supDL2}
Let $R =\|\cdot\|_w \in \sC$. Let $H_0$  the support of $k$ biggest weights in $w$, let $L\geq1$ an integer and $H_1$ a  support of  $L$ smallest weights in $w$. Suppose $\|w_{H_0}\|_1< \|w_{H_1}\|_1$

Then, for $L\geq k$
\begin{equation}
D_L(R) = \sup_{ z \neq 0 : \|z_{H_0}\|_w=\|z_{H_0^c}\|_w , |supp(z)| =k+L} \frac{\|z_{T^c}\|_\Sigma^2}{\|z_{T}\|_2^2}  \geq \frac{ \frac{1}{k}L^2}{ k\frac{\|w_{H_1}\|_1^2}{\|w_{H_0}\|_1^2} }. \\
\end{equation}
For $L\leq k$
\begin{equation}
D_L(R) = \sup_{ z \neq 0 : \|z_{H_0}\|_w=\|z_{H_0^c}\|_w , |supp(z)| =k+L} \frac{\|z_{T^c}\|_\Sigma^2}{\|z_{T}\|_2^2}  \geq \frac{L}{k}. \\
\end{equation}
\end{lemma}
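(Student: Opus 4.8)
The plan is to bound $D_L(R)$ from below by exhibiting in each regime an explicit admissible vector $z$ and evaluating the ratio $\|z_{T^c}\|_\Sigma^2/\|z_T\|_2^2$ on it; since $D_L(R)$ is a supremum, one well-chosen $z$ suffices for a lower bound. The only nonelementary ingredient is the value of the atomic norm $\|\cdot\|_\Sigma$ on the vectors I use, for which I would invoke the flat-vector formula already used in the proof of Lemma~\ref{lem:sup_DL_l1} (from \cite{Argyriou_2012}[Proposition 2.1] together with Lemma~\ref{lem:const_max_sigmaw_l1}): if $\supp(v)$ has at most $k$ elements then $\|v\|_\Sigma^2=\|v\|_2^2$, whereas if $v$ is flat on a support of size at least $k$ then $\|v\|_\Sigma^2=\tfrac1k\|v\|_1^2$. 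The split into $L\ge k$ and $L\le k$ is precisely the split that decides which branch of this formula applies to $z_{T^c}$, which will carry the $L$ nonzero entries lying outside $T$.

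For $L\ge k$ I would take the two-level vector equal to $\beta$ on $H_0$, to $\alpha$ on $H_1$, and zero elsewhere, so that $|\supp(z)|=k+L$. Choosing $\beta\|w_{H_0}\|_1=\alpha\|w_{H_1}\|_1$ enforces $\|z_{H_0}\|_w=\|z_{H_0^c}\|_w$, so $z$ is admissible, and the hypothesis $\|w_{H_0}\|_1<\|w_{H_1}\|_1$ forces $\beta\ge\alpha$, so that the $k$ largest coordinates are exactly those of $H_0$ and $T=H_0$. Then $z_{T^c}$ is flat on the $L\ge k$ coordinates of $H_1$, giving $\|z_{T^c}\|_\Sigma^2=\tfrac1k(L\alpha)^2$ while $\|z_T\|_2^2=k\beta^2$; substituting $\beta^2=\alpha^2\|w_{H_1}\|_1^2/\|w_{H_0}\|_1^2$ yields exactly the announced value $\tfrac{L^2/k}{k\,\|w_{H_1}\|_1^2/\|w_{H_0}\|_1^2}$.

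For $L\le k$ the two-level vector no longer helps: balancing against the heavy weights on $H_0$ would pull the ratio strictly below $L/k$. Instead I would use a flat vector $z=\mathbf 1_S$ on a support $S\supseteq H_0$ with $|S|=k+L$. Since $H_0$ collects the $k$ largest weights and $|H_0|=k\ge L=|S\setminus H_0|$, one has $\|z_{H_0}\|_w=\|w_{H_0}\|_1\ge\|w_{S\setminus H_0}\|_1=\|z_{H_0^c}\|_w$, so $z$ lies in the descent cone in the sense of Lemma~\ref{lem:optimal_support_nsigma}; note that for $L\le k$ it is this inequality, and not the stated hypothesis $\|w_{H_0}\|_1<\|w_{H_1}\|_1$, that is available. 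As all entries are equal, $T$ is any $k$ of them and $z_{T^c}$ is supported on $L\le k$ coordinates, so $\|z_{T^c}\|_\Sigma^2=\|z_{T^c}\|_2^2=L$ and $\|z_T\|_2^2=k$, giving the ratio $L/k$.

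The step I expect to require the most care is reconciling the descent-cone inequality $\|z_{H_0}\|_w\ge\|z_{H_0^c}\|_w$ with the equality $\|z_{H_0}\|_w=\|z_{H_0^c}\|_w$ appearing in the definition of $D_L(R)$, while keeping the support size pinned at $k+L$. For the unrestricted quantity this is exactly the reduction of Lemma~\ref{lem:charact_supDp_final} (valid under $\|w_{H_0}\|_1<\|w_{H_0^c}\|_1$), and the monotonicity of $\|\cdot\|_\Sigma$ from Lemmas~\ref{lem:sigma_norm_increasing}--\ref{lem:sigma_norm_increasing2} ensures the coordinate adjustments used there cannot decrease the ratio; the hard part will be verifying that those adjustments can be performed inside the fixed support $S$ (or that any induced change of support only increases the ratio), so that the flat vector still certifies $D_L(R)\ge L/k$. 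With both regimes established, the estimate plugs into the proof of Theorem~\ref{th:RIP_suff} in complete analogy with the role of Lemma~\ref{lem:charact_supBL2} in the necessary case.
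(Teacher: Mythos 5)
Your proof is correct and matches the paper's own argument essentially verbatim: the same two-level vector ($\beta$ on $H_0$, $\alpha$ on $H_1$, balanced so that $\|z_{H_0}\|_w=\|z_{H_1}\|_w$, with $\beta\ge\alpha$ forcing $T=H_0$) for $L\ge k$, and the same flat vector for $L\le k$, with $\|z_{T^c}\|_\Sigma^2$ evaluated via the flat-vector formula of \cite{Argyriou_2012}. The equality-versus-inequality issue you flag in the $L\le k$ case is genuine but is present in the paper's proof as well (the paper only verifies $\|z_{H_0}\|_w\ge\|z_{H_1}\|_w$ there); it is harmless because what Theorem~\ref{th:RIP_suff} ultimately needs is a lower bound on $D_\Sigma(R)$, for which membership of $z$ in the descent cone suffices.
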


\begin{proof}
Let $z$ such that for $j \in H_1$, $|z_j| = \alpha$, for $j\in H_0$, $|z_j| = \beta$ and $z_j=0$ otherwise. Set  $0 \neq \beta = \frac{\|w_{H_1}\|_1}{\|w_{H_0}\|_1}\alpha $. Hence, $\|z_{H_0}\|_w=\|z_{H_1}\|_w$ and, using the hypothesis, $\beta \geq \alpha$. For $L\geq k$, this gives

\begin{equation}
  \frac{\|z_{T^c}\|_\Sigma^2}{\|z_{T}\|_2^2} = \frac{  \|y\|_\Sigma^2\alpha^2}{ k\beta^2 } =\frac{  \|y\|_\Sigma^2}{ k\frac{\|w_{H_1}\|_1^2}{\|w_{H_0}\|_1^2} }
 \end{equation}
 where $y_i =1$ for $i\in H_1$ and $y_i=0$ otherwise. From \cite{Argyriou_2012}[Proposition 2.1] $\|y\|_\Sigma^2 = \frac{1}{k}L^2$ 
 
\begin{equation}
 \frac{\|z_{T^c}\|_\Sigma^2}{\|z_{T}\|_2^2} =\frac{ \frac{1}{k}L^2}{ k\frac{\|w_{H_1}\|_1^2}{\|w_{H_0}\|_1^2} }. 
 \end{equation}
 For $L \leq K$, take $z = (1,1,1..1,0,0,0)$, using the hypothesis on $H_1$ and $H_0$, we have $\|z_{H_0}\|_w\geq \|z_{H_1}\|_w$ and
\begin{equation}
 \frac{\|z_{T^c}\|_\Sigma^2}{\|z_{T}\|_2^2} = \frac{L}{k} . 
 \end{equation}
\end{proof}

\begin{proof}[Proof of Theorem~\ref{th:RIP_suff}]
 For $L > k$, let $\sC_L =\{\|\cdot\|_w \in \sC : \|w_{H_0}\|_1 <\|w_{H_1}\|_1 \}$  (we define  $H_0$  as the support of $k$ biggest weights in $w$, and  $H_1$ as the support of the $L$ smallest weights). We have $\frac{\|w_{H_1}\|_1}{\|w_{H_0}\|_1} \leq \frac{L}{k}$. Hence, for $R \in \sC_L$, with Lemma~\ref{lem:charact_supDL2}, the value $D_L(R)$ is lower bounded (and reached) by the case $\|\cdot\|_1$ (Lemma~\ref{lem:sup_DL_l1}):
\begin{equation}\label{eq:ineq_DBL1}
\begin{split}
 D_L(R) \geq \min \left(1, \frac{L}{k}\right)  =D_L(\|\cdot\|_1)  \\
 \end{split}
\end{equation}
 We verify that it matches the sufficient RIP condition of $\frac{1}{\sqrt{2}}$ for $\ell^1$ recovery from \cite{Cai_2014,Traonmilin_2016}.

If $R \in \sC_L'$ the complementary in $\sC$ of $\sC_L $, take $z$ such that $z_i =\alpha$ for $i \in H_0 \cup H_1$, $z_i=0$ otherwise. We have $\|z_{H_0}\|_w \geq \|z_{H_1}\|_w$ and
\begin{equation}
D_L(R) \geq \frac{L}{k} \geq \min \left(1, \frac{L}{k}\right)  =   D_L(\|\cdot\|_1)
\end{equation}
Hence, we just showed that for any $R \in \sC$,
\begin{equation}
  D_\Sigma(R)=\max_L D_L(R)  \geq 1=\max_{L}  D_L(\|\cdot\|_1) =  D_\Sigma(\|\cdot\|_1)
\end{equation}
with equality only when $R = \|\cdot\|_1$. Taking the $\inf$ with respect to $R\in \sC$ of both sides of this inequality yields the result.
\end{proof}

\end{document}